\newtheorem{thm}{Theorem}
\newtheorem{defi}{Definition}
\newtheorem{lem}{Lemma}
\newtheorem{cor}{Corollary}
\newtheorem{prop}{Proposition}
\newtheorem{rmk}{Remark}
\newtheorem{conj}{Conjecture}
\newtheorem{const}{Construction}
\def\supp{\hbox{\rm{supp}}}
\def\Supp{{\rm{supp}}}
\renewcommand{\vec}[1]{\bm{#1}}
\def\md{{\;(\rm{mod}\;}}
\def\tr{{\rm{tr}}}
\def\wt{\hbox{\rm{wt}}}
\def\Tr{{\rm{Tr}}}
\def\ai{\hbox{\rm{AI}}}
\begin{document}

\title{Constructing Boolean Functions With Potential Optimal Algebraic Immunity Based on
Additive Decompositions of Finite Fields}

\author{\IEEEauthorblockN{Baofeng Wu}
\IEEEauthorblockA{State Key Lab of Information Security\\
Institute of Information Engineering\\
Chinese Academy of Sciences\\
Beijing 100093, China\\
Email: wubaofeng@iie.ac.cn} \and \IEEEauthorblockN{Qingfang Jin and
Zhuojun Liu}
\IEEEauthorblockA{Key Lab of Mathematics Mechanization\\
Academy of Mathematics and Systems Science\\
Chinese Academy of Sciences\\
Beijing 100190, China\\
Email: qfjin@amss.ac.cn\\and zliu@mmrc.iss.ac.cn} \and
\IEEEauthorblockN{Dongdai Lin}
\IEEEauthorblockA{State Key Lab of Information Security\\
Institute of Information Engineering\\
Chinese Academy of Sciences\\
Beijing 100093, China\\
Email: ddlin@iie.ac.cn}}

\maketitle

\begin{abstract}
We propose a general approach to construct cryptographic significant
Boolean functions of $(r+1)m$ variables based on the additive
decomposition $\mathbb{F}_{2^{rm}}\times\mathbb{F}_{2^m}$ of the
finite field $\mathbb{F}_{2^{(r+1)m}}$, where $r$ is odd and
$m\geq3$. A class of unbalanced functions are constructed first via
this approach, which coincides with a variant of the unbalanced
class of generalized Tu-Deng functions in the case $r=1$. This class
of functions have high algebraic degree, but their algebraic
immunity does not exceeds $m$, which is impossible to be optimal
when $r>1$. By modifying these unbalanced functions, we obtain a
class of balanced functions which have optimal algebraic degree and
high nonlinearity (shown by a lower bound we prove). These functions
have optimal algebraic immunity provided  a combinatorial conjecture
on binary strings which generalizes the Tu-Deng conjecture is true.
Computer investigations show that, at least for small values of
number of variables, functions from this class also behave well
against fast algebraic attacks.
\end{abstract}

\IEEEpeerreviewmaketitle

\section{Introduction}
Constructing Boolean functions satisfying all main criteria has
attracted a lot of attention in recent year. Among all these
criteria, optimal algebraic immunity seems necessary due to the
great success of algebraic attacks introduced (improved, more
definitely) by Courtious and Meier to some well-known
Boolean-function-based stream ciphers \cite{aa03}. Other criteria
for Boolean functions that can play as potential candidates in
designing such LFSR-based pseudo-random generators as filter
generators include balancedness, high algebraic degree and high
nonlinearity. Besides, because of the existence of the improved
algebraic attacks, the fast algebraic attacks (FAA's) \cite{faa03},
a good behavior against FAA's is also required for Boolean functions
to be usable in cryptography.

In fact, it is a big challenge to construct Boolean function with
optimal algebraic immunity together with  all other good
cryptographic properties and there has been little work on such a
topic until 2008. In their pioneering work \cite{CF08}, Carlet and
Feng constructed a classes of balanced functions with optimal
algebraic immunity, optimal algebraic degree, high nonlinearity and
good behavior against FAA's (verified by computers initially in
\cite{CF08} and confirmed by Liu et al. in \cite{liumc}
theoretically very recently). Their construction is based on finite
fields and the proof of optimal immunity of the constructed
functions is mostly based on univariate representations of Boolean
functions. Motivated by their idea of construction, Tu and Deng went
a further step. They constructed a class of balanced functions of
even number of variables with optimal algebraic degree, high
nonlinearity and potential optimal algebraic immunity. By
``potential" we mean that the optimal algebraic immunity is up to a
conjecture on binary strings (known as the Tu-Deng conjecture now)
which is not mathematically proved. In fact, their functions are
modified from functions belonging to a subclass of the well-known
$\mathcal{P}\mathcal{S}_{\text{ap}}$ class of bent functions. A
weakness of this class of functions is their immunity against FAA's
is bad \cite{carlet}. However, the idea of Tu and Deng's
construction is enlightening. Adopting similarly techniques, Tang et
al. constructed a class of functions satisfying all main criteria.
It is remarkable that the optimal algebraic immunity of this class
of functions is based on a combinatorial fact firstly conjectured by
Tang et al. and proved by Cohen and Flori \cite{cohen} afterwards.
Based on a general conjecture involving a parameter which can be
chosen rather freely  mentioned in \cite{tang} (known as the
generalized Tu-Deng conjecture), Jin et al. proposed a construction
of Boolean functions with optimal immunity covering those in
\cite{tu} and \cite{tang}. All the functions obtained in
\cite{tu,tang,jin} are constructed from a decomposition of the
finite field into a direct sum of a subfield and a copy of it, and
the proofs of (potential) optimal algebraic immunity of them are
mostly based on the so-called bivariate representations of Boolean
functions.

Note that the decompositions of finite fields used in
\cite{tu,tang,jin} are all additive ones. More precisely, the
additive group of a finite field  is decomposed into a direct sum
 of two additive groups with equal sizes to construct functions.
Therefore, to generalize the constructions in \cite{tu,tang,jin}, a
natural idea  is to use decompositions of additive groups of finite
 fields into  direct sums of additive groups with unequal sizes.
Besides, to study properties of functions constructed from such
kinds of decompositions,  the summands of a decomposition are
preferred both to be additive groups of certain finite fields.

In the present paper, we devote to realize this idea. By decomposing
the additive group of the finite field $\mathbb{F}_{2^{(r+1)m}}$
into a direct sum of additive groups of the finite fields
$\mathbb{F}_{2^{rm}}$ and $\mathbb{F}_{2^{m}}$ for an odd integer
$r\geq1$ and an integer $m\geq3$, we construct a class of
$(r+1)m$-variable unbalanced Boolean functions in a similar manner
with those in \cite{tu,tang,jin}. This class coincides with a
variant of the unbalanced class proposed in \cite{jin} when $r=1$,
but when $r>1$, some properties of functions belonging  to it are
different, say, their algebraic immunity will never be optimal.
However, after a modification of this class, we obtain a class of
balanced functions with optimal algebraic immunity provided a
combinatorial conjecture is true, but the proof of optimal algebraic
immunity of these functions in the case $r>1$ is quite different
from the proof in the case $r=1$, i.e. the proof of optimal
algebraic immunity of the balanced functions obtained in \cite{jin}.
In fact, in the case $r>1$, the first things that should be made
clear are,  how to represent functions defined from the additive
decomposition before-mentioned and how to study properties of such
functions under this kind of representation if we can find it.

The rest of the paper is organized as follows. In the following
section, we recall some basic notions about Boolean functions and
talk about bivariate representations of Boolean functions over
direct sums of finite fields. In Section III, we present a general
combinatorial  conjecture on binary strings. In Section IV, we
propose a class of unbalanced functions to make our idea of
constructing a class of balanced functions with good cryptographic
properties, which is proposed in Section V, more clear. Concluding
remarks are given in Section VI.

\section{Preliminaries}
In this section, we provide some basic notations and facts about
Boolean functions. For more details, we refer to \cite{carletbook}.

\subsection{Boolean functions and related basic notions}
Let $\mathbb{F}_{2}$  be the binary finite field and
$\mathbb{F}_{2}^{n}$ be the $n$-dimensional vector space over
$\mathbb{F}_{2}$. Any mapping from $\mathbb{F}_{2}^{n}$ to
$\mathbb{F}_{2}$ is called an  $n$-variable Boolean function.
Obviously, the set $\mathbb{B}_{n}$ consisting of all $n$-variable
Boolean functions forms an $\mathbb{F}_{2}$-algebra of dimension
$2^n$. For a Boolean function $f\in\mathbb{B}_n$, its support  is
defined as
$$\supp(f)=\{\vec x \in \mathbb{F}_{2}^{n}\mid f(\vec x)=1\},$$ and
the cardinality of this set, denoted by $\wt(f)$, is called its
Hamming weight. $f$ is called balanced if $\wt(f)=2^{n-1}$.
Furthermore, for another Boolean function $g\in\mathbb{B}_n$, the
distance between $f$ and $g$ is defined as ${\rm{d}}(f,g)=\wt(f+g)$.
 Abusing
notations, we also denote the Hamming weight of a vector $\vec
v\in\mathbb{F}_{2}^{n}$, i.e. the number of nonzero positions of
$\vec v$, to be $\wt(\vec v)$. Besides, for an integer $i$, we
denote by $\wt_n(i)$ the number of 1's in the binary expansion of
the reduction of $i$ modulo $(2^n-1)$ in the complete residue system
$\{0,1,\ldots,2^n-2\}$. Obviously, $\wt_n(-u)=n-\wt_n(u)$ when
$2^{n}-1\,\nmid\,u$.

By Lagrange interpolation, every $n$-variable Boolean function $f$
can be uniquely represented as
$$f(x_{1},\ldots,x_{n})=\sum_{I\subseteq \{1,2,\ldots,n\}}a_{I}
\,\prod_{i\in I}x_{i},~~a_I\in\mathbb{F}_{2}. $$ The deep reason for
the existence of such kinds of representations of Boolean functions,
often known as algebraic normal forms (ANF's) of Boolean functions,
lies in the isomorphism between $\mathbb{F}_{2}$-algebras
\[\mathbb{B}_n\cong\mathbb{F}_{2}[x_1,x_2,\ldots,x_n]/\langle x_1^2+x_1,\ldots,x_n^2+x_n\rangle.\]
Thanks to its ANF, we can define the algebraic degree of $f$, $\deg
f$,  to be the degree of $f(x_1,\ldots,x_n)$ as a multivariate
polynomial, i.e. $\deg f=\max_{I\subseteq \{1,2,\ldots,n\}}\{ |I|
\mid a_{I}\neq 0 \}$. Boolean functions of degree at most $1$ are
called affine functions. The minimum distance between $f$ and all
affine functions is called the nonlinearity of $f$ and denoted to be
$\mathcal{N}_f$. This notion characterizes how different is $f$ from
the simplest Boolean functions.

As is well known that the additive group of the finite field
$\mathbb{F}_{2^{n}}$ is an $n$-dimensional vector space over
$\mathbb{F}_{2}$, hence by Lagrange interpolation, the Boolean
function $f$ can also be represented by a univariate polynomial over
$\mathbb{F}_{2^{n}}$ of the form
$$f(x)=\sum_{i=0}^{2^{n}-1}f_{i}x^{i}.$$
However, since $f$ satisfies $f^2(x)=f(x)$ for any
$x\in\mathbb{F}_{2^{n}}$, there are some restrictions on the
coefficients of this kind of univariate representation. This kind of
representation implies that as $\mathbb{F}_{2}$-algebras,
$\mathbb{B}_{n}$ can be viewed as a subalgebra of
$\mathbb{F}_{2^n}/\langle x^{2^n}+x\rangle$. Comparing dimensions,
we can also obtain the isomorphism
\[\mathbb{F}_{2^n}/\langle x^{2^n}+x\rangle\cong\mathbb{B}_{n}\otimes_{\mathbb{F}_{2}}\mathbb{F}_{2^n}.\]
It can be deduced that, under its univariate representation, the
algebraic degree of $f$ is in fact
$$\deg f=\max_{0\leq i\leq 2^{n}-1}\{\wt_n(i)\mid f_{i} \neq 0\}.$$

\subsection{Bivariate representations of Boolean functions}

In fact,  representations of Boolean functions are more flexible
than what can be fully described. In this subsection, we introduce
the bivariate representations of Boolean functions, which have
already been mentioned in \cite{liu} without explaining details.

Assume $n=n_1+n_2$ for two integers $n_1,~n_2\geq1$. We can
decompose the additive group of $\mathbb{F}_{2^n}$ into a direct sum
of additive groups of $\mathbb{F}_{2^{n_1}}$ and
$\mathbb{F}_{2^{n_2}}$. Thus every $n$-variable Boolean function can
be viewed as a mapping from
$\mathbb{F}_{2^{n_1}}\times\mathbb{F}_{2^{n_2}}$ to
$\mathbb{F}_{2}$. By Lagrange interpolation, we can express
$f\in\mathbb{B}_{n}$ as
\begin{eqnarray*}
   f(x,y)&=&\sum_{(a,b)\in\mathbb{F}_{2^{n_1}}\times\mathbb{F}_{2^{n_2}}}
f(a,b)[1+(x+a)^{2^{n_1}-1}]  \\
   &&\qquad\quad\qquad\qquad\qquad\times[1+(y+b)^{2^{n_2}-1}]
\end{eqnarray*}
To expand this expression, we should do operations (multiplications
and additions) of elements from $\mathbb{F}_{2^{n_1}}$ and
$\mathbb{F}_{2^{n_2}}$. The smallest field in which these operations
can be done is the composite filed of $\mathbb{F}_{2^{n_1}}$ and
$\mathbb{F}_{2^{n_2}}$, i.e. $\mathbb{F}_{2^{[n_1,n_2]}}$, where
``$[\cdot,\cdot]$" represents the least common multiple of two
integers. Hence $f$ can actually be represented into the form
\begin{equation}\label{birep}
f(x,y)=\sum_{i=0}^{2^{n_1}-1}\sum_{j=0}^{2^{n_2}-1}f_{i,j}x^iy^j,~f_{i,j}\in\mathbb{F}_{2^{[n_1,n_2]}}.
\end{equation}
We call this kind of representation the bivariate representation of
$f$ over $\mathbb{F}_{2^{n_1}}\times\mathbb{F}_{2^{n_2}}$. It
follows that as $\mathbb{F}_{2}$-algebras, $\mathbb{B}_{n}$ can be
viewed as a subalgebra of $\mathbb{F}_{2^{[n_1,n_2]}}[x,y]/\langle
x^{2^{n_1}}+x,y^{2^{n_2}}+y\rangle$. Comparing dimensions we can
also deduce the isomorphism
\[\mathbb{F}_{2^{[n_1,n_2]}}[x,y]/\langle
x^{2^{n_1}}+x,y^{2^{n_2}}+y\rangle\cong\mathbb{B}_{n}\otimes_{\mathbb{F}_{2}}\mathbb{F}_{2^{[n_1,n_2]}}.\]
To obtain the ANF of $f$ from its bivariate representation, we just
need to choose two bases $\{\alpha_1,\ldots,\alpha_{n_1}\}$ and
$\{\beta_1,\ldots,\beta_{n_2}\}$ of $\mathbb{F}_{2^{n_1}}$ and
$\mathbb{F}_{2^{n_2}}$ over $\mathbb{F}_{2}$ respectively, and write
$x=\sum_{i=1}^{n_1}x_i\alpha_i$, $y=\sum_{j=1}^{n_2}y_j\beta_j$ for
two sets of variables $x_1,\ldots,x_{n_1}$ and $y_1,\ldots,y_{n_2}$
over $\mathbb{F}_{2}$, and then put them into $f(x,y)$. It can be
easily observed from this process that
$$\deg f\leq\max_{0\leq i\leq 2^{n_1}-1\atop 0\leq j\leq 2^{n_2}-1}\{\wt_{n_1}(i)+\wt_{n_2}(j)\mid f_{i,j}\neq 0\}.$$
The following lemma confirms that "$=$" actually holds.

\begin{prop}\label{birepdeg}
Assume $n=n_1+n_2$ and $f\in\mathbb{B}_{n}$ with the bivariate
representation \eqref{birep}.¡¡ Then
$$\deg f=\max_{0\leq i\leq 2^{n_1}-1\atop 0\leq j\leq 2^{n_2}-1}\{\wt_{n_1}(i)+\wt_{n_2}(j)\mid f_{i,j}\neq 0\}.$$
\end{prop}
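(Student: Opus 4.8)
\smallskip
\noindent\emph{Proof idea.} The bound ``$\le$'' is precisely the observation recorded just before the statement, so only ``$\ge$'' needs work. Put $L=[n_1,n_2]$ and $d=\max\{\wt_{n_1}(i)+\wt_{n_2}(j)\mid f_{i,j}\neq0\}$, attained at some $(i_0,j_0)$; fix $\mathbb{F}_2$-bases $\{\alpha_1,\dots,\alpha_{n_1}\}$ and $\{\beta_1,\dots,\beta_{n_2}\}$, write $x=\sum_a x_a\alpha_a$, $y=\sum_b y_b\beta_b$, and set $x_I=\prod_{a\in I}x_a$, $y_J=\prod_{b\in J}y_b$; substituting these expressions into \eqref{birep} produces the ANF of $f$. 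The plan is to compute the degree-$d$ part of this ANF and show it does not vanish. The basic mechanism is that $x\mapsto x^{2^s}$ is $\mathbb{F}_2$-linear and $x_a^{2^s}=x_a$, so $x^{2^s}=\sum_a\alpha_a^{2^s}x_a$ is a linear form in the $x_a$; hence, with $S_i$ the binary support of $i$ (so $|S_i|=\wt_{n_1}(i)$), $x^i=\prod_{s\in S_i}x^{2^s}$ is a product of $\wt_{n_1}(i)$ linear forms, and after reducing $x_a^2\to x_a$ its expansion involves only monomials $x_I$ with $|I|\le\wt_{n_1}(i)$, the part of top degree $|I|=\wt_{n_1}(i)$ being $\sum_{|I|=\wt_{n_1}(i)}\mu(S_i,I)\,x_I$, where $\mu(S_i,I)$ is the permanent --- equivalently, in characteristic $2$, the determinant --- of the square matrix $(\alpha_a^{2^s})_{s\in S_i,\,a\in I}$. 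The same holds for $y^j$, with analogous minors $\nu(T_j,J)$ of $(\beta_b^{2^t})$.

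Granting this, I would compute the coefficient $a_{I,J}$ of $x_Iy_J$ in the ANF of $f$ for $|I|+|J|=d$. A term $f_{i,j}x^iy^j$ can contribute to $a_{I,J}$ only if $\wt_{n_1}(i)+\wt_{n_2}(j)\ge|I|+|J|=d$; by maximality of $d$ this means $\wt_{n_1}(i)+\wt_{n_2}(j)=d$ for every contributing term with $f_{i,j}\neq0$, and then, since $|I|\le\wt_{n_1}(i)$ and $|J|\le\wt_{n_2}(j)$ with $|I|+|J|=\wt_{n_1}(i)+\wt_{n_2}(j)$, the contribution comes only from the top-degree terms, namely when $|I|=\wt_{n_1}(i)$ and $|J|=\wt_{n_2}(j)$, and then equals $f_{i,j}\,\mu(S_i,I)\,\nu(T_j,J)$. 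Therefore $(a_{I,J})_{|I|+|J|=d}$ is the image of $(f_{i,j})_{\wt_{n_1}(i)+\wt_{n_2}(j)=d}$ under the linear map that on the block $\wt_{n_1}(i)=|I|=w_1,\ \wt_{n_2}(j)=|J|=w_2$ (where $w_1+w_2=d$) is the Kronecker product $\bigl(\mu(S,I)\bigr)_{|S|=|I|=w_1}\otimes\bigl(\nu(T,J)\bigr)_{|T|=|J|=w_2}$.

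The crux --- and the one step that is not bookkeeping --- is that this map is invertible, i.e.\ that the top-degree contributions of distinct monomials do not cancel; this is exactly where the product decomposition $\mathbb{F}_{2^{n_1}}\times\mathbb{F}_{2^{n_2}}$ is used, reducing the question to one variable at a time. It is enough to show each matrix $\bigl(\mu(S,I)\bigr)_{|S|=|I|=w}$ is invertible, and this is the $w$-th compound matrix of $M=(\alpha_a^{2^s})_{0\le s\le n_1-1,\,1\le a\le n_1}$; and $M$ is invertible because a nonzero $\mathbb{F}_2$-linear combination $\sum_s c_sX^{2^s}$ is a nonzero additive polynomial of degree at most $2^{n_1-1}<2^{n_1}$, hence has fewer than $2^{n_1}$ roots and cannot vanish on all of $\mathbb{F}_{2^{n_1}}$, so the rows of $M$ are $\mathbb{F}_{2^{n_1}}$-linearly independent. (Equivalently, the invertibility of all these compounds is a reformulation of the univariate degree formula recorded in Section~II, which goes through verbatim for functions valued in $\mathbb{F}_{2^L}$, e.g.\ by splitting into $\mathbb{F}_2$-components.) As compounds, Kronecker products, and direct sums of invertible matrices are invertible, the map above is injective; since $f_{i_0,j_0}\neq0$ the source vector $(f_{i,j})_{\wt_{n_1}(i)+\wt_{n_2}(j)=d}$ is nonzero, so some $a_{I,J}$ with $|I|+|J|=d$ is nonzero, whence $\deg f\ge d$ and the proposition follows.
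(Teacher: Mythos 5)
Your proof is correct, but it takes a genuinely different route from the paper's. The paper argues by a dimension count: it lets $R_d$ be the space of functions whose bivariate representation uses only monomials $x^iy^j$ with $\wt_{n_1}(i)+\wt_{n_2}(j)\le d$ and $B_d$ the space of functions of algebraic degree at most $d$; the easy inequality gives $R_d\subseteq B_d$, and equality follows because $\dim_{\mathbb{F}_2}R_d=\sum_{k_1+k_2\le d}\binom{n_1}{k_1}\binom{n_2}{k_2}$ (computed after tensoring with $\mathbb{F}_{2^{[n_1,n_2]}}$, where the monomials form a basis) matches $\dim_{\mathbb{F}_2}B_d=\sum_{k\le d}\binom{n_1+n_2}{k}$ by Vandermonde's convolution. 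That argument is shorter and needs only one combinatorial identity, but it is non-constructive: it never identifies which degree-$d$ ANF monomial survives. Your argument instead computes the change of basis explicitly, showing that the map from the bivariate coefficients on the top weight level to the degree-$d$ ANF coefficients is block-diagonal with blocks given by Kronecker products of compound matrices of the Moore matrix $(\alpha_a^{2^s})$, whose invertibility you correctly reduce to the fact that a nonzero additive polynomial of degree less than $2^{n_1}$ cannot vanish on all of $\mathbb{F}_{2^{n_1}}$ (the identification of the permanent with the determinant in characteristic $2$, and the multiplicativity of compounds, are both fine). This buys strictly more information --- an explicit invertible linear correspondence between the two graded pieces at every degree, hence in particular a description of exactly which ANF monomials of top degree occur --- at the cost of more bookkeeping. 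One cosmetic point: you should note (as you implicitly do by working with the binary support $S_i$ for $0\le i\le 2^{n_1}-1$) that $\wt_{n_1}$ here means the genuine Hamming weight of the exponent in that range, so that $\wt_{n_1}(2^{n_1}-1)=n_1$; the paper's reduction-mod-$(2^{n_1}-1)$ convention must be read the same way for the statement to be correct.
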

\begin{IEEEproof}
Denote $\mathbb{F}_{2^{[n_1,n_2]}}[x,y]/\langle
x^{2^{n_1}}+x,y^{2^{n_2}}+y\rangle$ by $\mathcal{R}_n$ and let
$\mathbb{R}_n$ be the $\mathbb{F}_2$-subalgebra of $\mathcal{R}_n$
which is isomorphism to $\mathbb{B}_n$. For any $0\leq d\leq n$, let
${R}_d=\{h\in\mathbb{R}_n\mid
h=\sum_{i,j}h_{i,j}x^iy^j,~\wt_{n_1}(i)+\wt_{n_2}(j)\leq
d~\text{for~all}~i,j~\text{with}~h_{i,j}\neq0\}$ and
${B}_d=\{h\in\mathbb{B}_n\mid \deg h\leq d\}$, which are
$\mathbb{F}_2$-subspaces of $\mathbb{R}_n$ and $\mathbb{B}_n$
respectively. We just need to prove that
$\dim_{\mathbb{F}_2}R_d=\dim_{\mathbb{F}_2}B_d$ for all $0\leq d\leq
n$. First it is easy to see that
\[\dim_{\mathbb{F}_2}B_d=\sum_{k=0}^{d}{n\choose k}=\sum_{k=0}^{d}{n_1+n_2\choose k}.\]
To get $\dim_{\mathbb{F}_2}R_d$, we note that
$\bar{R}_d=R_d\otimes_{\mathbb{F}_2}\mathbb{F}_{2^{[n_1,n_2]}}$
where
\[\bar{R}_d:=
\left\{h\in\mathcal{R}_n\left|\begin{array} {c}
h=\sum_{i,j}h_{i,j}x^iy^j,\\
\wt_{n_1}(i)+\wt_{n_2}(j)\leq
d\\\text{for~all}~i,j~\text{with}~h_{i,j}\neq0
\end{array}\right.\right\}.\]
In fact, this can be observed from the isomorphism
$\mathbb{R}_n\otimes_{\mathbb{F}_2}\mathbb{F}_{2^{[n_1,n_2]}}=\mathcal{R}_n$
because essentially  the
``$\otimes_{\mathbb{F}_2}\mathbb{F}_{2^{[n_1,n_2]}}$" operation only
extends the definitional domain of coefficients of terms of
functions in $R_d$ to extend $R_d$ to be an
$\mathbb{F}_{2^{[n_1,n_2]}}$-vector space (more precisely, if $R_d$
is spanned by a basis $\{\beta_i\}$ over $\mathbb{F}_2$, then
$\bar{R}_d$ is spanned by the same basis over
$\mathbb{F}_{2^{[n_1,n_2]}}$), but all these terms $(x^iy^j)$'s and
the corresponding $(\wt_{n_1}(i)+\wt_{n_2}(j))$'s are not affected.
Therefore, we have
\[\dim_{\mathbb{F}_2}R_d=\dim_{\mathbb{F}_{2^{[n_1,n_2]}}}\bar{R}_d=
\sum_{0\leq k_1+k_2\leq d}{n_1\choose k_1}{n_2\choose k_2}.\] By the
Vandermonde's convolution for binomial coefficients \cite{graham},
we have
\[\sum_{0\leq k\leq d}{n_1+n_2\choose k}=\sum_{0\leq k_1+k_2\leq d}{n_1\choose k_1}{n_2\choose k_2}.\]
This completes the proof.
\end{IEEEproof}

\begin{rmk}
One may intuitively think the result of Proposition \ref{birepdeg}
natural. In fact, when $n_1=n_2=n/2$ for an even integer $n$, the
bivariate representations of Boolean functions  in this case were
frequently used in some authors' work (see e.g.
\cite{tu,tang,jin,liu}), and in all these work Proposition
\ref{birepdeg} was considered conventional and obvious, and was used
without given a proof of it. However, we can see from the proof of
Proposition \ref{birepdeg} that, even for the above simple case,
this result is far from obvious.
\end{rmk}

\subsection{Walsh transform of Boolean functions}
The Walsh transform of a Boolean function is a useful tool in
studying properties of it. The background of this concept is Fourier
analysis on finite Abelian groups. In nature, for a Boolean function
$f$, its Walsh transform is the Fourier transform of the complex
valued function $(-1)^f$ on a finite Abelian group. More precisely,
for $f\in\mathbb{B}_n$, its Walsh transform at any $\vec
a\in\mathbb{F}_{2}^{n}$ can be defined as
\[W_f(\vec a)=\sum_{\vec x\in\mathbb{F}_{2}^{n}}(-1)^{f(\vec x)+\vec a\cdot \vec x}
=\sum_{\vec x\in\mathbb{F}_{2}^{n}}(-1)^{f(\vec x)}\chi_{\vec
a}(\vec x),\] where ``$\cdot$" represents  the Euclidean inner
product of vectors and $\chi_{\vec a}$ is defined by $\chi_{\vec
a}(\vec x)=(-1)^{\vec a\cdot \vec x}$, $\forall \vec
x\in\mathbb{F}_{2}^{n}$. This is because the dual group
$\widehat{\mathbb{F}_{2}^{n}}$ of the additive Abeliean group
$\mathbb{F}_{2}^{n}$, i.e. the group formed by all additive
characters of $\mathbb{F}_{2}^{n}$,  is actually $\{\chi_{\vec
a}\mid \vec a\in\mathbb{F}_{2}^{n}\}$, all elements of which forms a
standard orthogonal basis of the space formed by all functions from
the group $\mathbb{F}_{2}^{n}$ to $\mathbb{C}^*$, the multiplication
group of the complex field. The Fourier transform of the complex
valued function $(-1)^f$ at $\vec\lambda\in\mathbb{F}_{2}^{n}$ is in
fact the coefficient before the term $\chi_{\vec\lambda}$ of the
Fourier expansion (i.e. the expansion under the basis $\{\chi_{\vec
a}\mid \vec a\in\mathbb{F}_{2}^{n}\}$) of $(-1)^f$. By this
definition, it can be easily derived that $f$ is balanced if and
only if $W_{f}(\vec0)=0$, and the nonlinearity of $f$ can be
equivalently expressed as
$$\mathcal{N}_{f}=2^{n-1}-\frac{1}{2}\max_{\vec a\in \mathbb{F}_{2}^{n}}|
W_{f}(\vec a)|.$$

According to the meaning of Walsh transform explained above, we are
clear that the Walsh transform of $f\in\mathbb{B}_n$ at any
$a\in\mathbb{F}_{2^n}$ can be defined as
$$W_{f}(a)=\sum_{x\in\mathbb{F}_{2^{n}}}(-1)^{f(x)+\tr_{1}^{n}(a
x)},$$ where $\tr_{1}^{n}(\cdot)$ is the trace function from
 $\mathbb{F}_{2^{n}}$ to $\mathbb{F}_{2}$, i.e.
 $\tr_{1}^{n}(x)=\sum_{i=0}^{n-1}x^{2^i}$ for any $x\in\mathbb{F}_{2^{n}}$.
 This is because in this case the dual group of $\mathbb{F}_{2^{n}}$
 is $\widehat{\mathbb{F}_{2^{n}}}=\{\chi_a\mid a\in\mathbb{F}_{2^{n}}\}$
 where for any $a\in\mathbb{F}_{2^{n}}$,
 $\chi_a(x):=(-1)^{\tr_{1}^{n}(ax)}$, $\forall x\in\mathbb{F}_{2^{n}}$.
 Furthermore, when $n=n_1+n_2$ and $f$ is viewed as a function from
 $\mathbb{F}_{2^{n_1}}\times\mathbb{F}_{2^{n_2}}$ to
 $\mathbb{F}_{2}$, the Walsh transform of $f$ at any $(a,b)\in\mathbb{F}_{2^{n_1}}\times\mathbb{F}_{2^{n_2}}$
 can be defined as
 \[W_f(a,b)=\sum_{(x,y)\in\mathbb{F}_{2^{n_1}}\times\mathbb{F}_{2^{n_2}}}
 (-1)^{f(x,y)+\tr_1^{n_1}(ax)+\tr_1^{n_2}(by)}.\]
This is because in this case
\[(\mathbb{F}_{2^{n_1}}\times\mathbb{F}_{2^{n_2}})^{\wedge}=
\widehat{\mathbb{F}_{2^{n_1}}}\times\widehat{\mathbb{F}_{2^{n_2}}}
=\{\chi_a\cdot\psi_b\mid
a\in\mathbb{F}_{2^{n_1}},b\in\mathbb{F}_{2^{n_2}}\},\] where for any
$a\in\mathbb{F}_{2^{n_1}}$, $b\in\mathbb{F}_{2^{n_2}}$,
$\chi_a(x):=(-1)^{\tr_1^{n_1}(ax)}$, $\psi_b(y):=\tr_1^{n_2}(by)$,
$\forall x\in\mathbb{F}_{2^{n_1}},~y\in\mathbb{F}_{2^{n_2}}$,
according to the following lemma (see e.g.
\cite[Exercise~5.4]{lidl}), the proof of which is simple and will be
omitted.
\begin{lem}
Let $G_1,~G_2$ be two Abelian groups. Then $\widehat{G_1\times
G_2}\cong\widehat{G_1}\times \widehat{G_2}$.
\end{lem}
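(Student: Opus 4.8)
The plan is to exhibit an explicit isomorphism rather than to argue abstractly. Write $e_1,e_2$ for the identities of $G_1,G_2$, and use the factorization $(g_1,g_2)=(g_1,e_2)(e_1,g_2)$ in $G_1\times G_2$. Define
\[\Phi:\widehat{G_1}\times\widehat{G_2}\longrightarrow\widehat{G_1\times G_2},\qquad \Phi(\chi_1,\chi_2)(g_1,g_2)=\chi_1(g_1)\,\chi_2(g_2).\]
First I would check that $\Phi(\chi_1,\chi_2)$ really is a character of $G_1\times G_2$: it takes values in $\mathbb{C}^*$, and its multiplicativity follows coordinatewise from the multiplicativity of $\chi_1$ and $\chi_2$. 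Next, $\Phi$ is itself a group homomorphism, because multiplication of characters is defined pointwise, so $\Phi(\chi_1\chi_1',\chi_2\chi_2')=\Phi(\chi_1,\chi_2)\Phi(\chi_1',\chi_2')$.

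Then I would construct the candidate inverse by restriction. The inclusions $\iota_1:G_1\to G_1\times G_2$, $g_1\mapsto(g_1,e_2)$ and $\iota_2:G_2\to G_1\times G_2$, $g_2\mapsto(e_1,g_2)$ are group homomorphisms, so for $\chi\in\widehat{G_1\times G_2}$ the compositions $\chi\circ\iota_1$ and $\chi\circ\iota_2$ are characters of $G_1$ and $G_2$ respectively. Set
\[\Psi:\widehat{G_1\times G_2}\longrightarrow\widehat{G_1}\times\widehat{G_2},\qquad \Psi(\chi)=(\chi\circ\iota_1,\;\chi\circ\iota_2),\]
which is again a homomorphism since restriction is compatible with pointwise multiplication.

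Finally, I would verify that $\Phi$ and $\Psi$ are mutually inverse. The identity $\Psi\circ\Phi=\mathrm{id}$ is immediate from evaluating $\Phi(\chi_1,\chi_2)$ at $(g_1,e_2)$ and $(e_1,g_2)$. For $\Phi\circ\Psi=\mathrm{id}$, given $\chi\in\widehat{G_1\times G_2}$ one computes
\[\Phi(\Psi(\chi))(g_1,g_2)=\chi(g_1,e_2)\,\chi(e_1,g_2)=\chi\bigl((g_1,e_2)(e_1,g_2)\bigr)=\chi(g_1,g_2),\]
where the middle equality is exactly the point at which the homomorphism property of $\chi$ is invoked. Since $\Phi$ and $\Psi$ are mutually inverse group homomorphisms, they are isomorphisms. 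I do not expect any genuine obstacle here: the only substantive ingredient is that a character of a product group is both determined by and freely recoverable from its restrictions to the two factors, and the computation above is essentially the whole content of the lemma.
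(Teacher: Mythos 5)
Your proof is correct: the explicit maps $\Phi$ and $\Psi$ are well-defined mutually inverse homomorphisms, and the only substantive step, $\chi(g_1,g_2)=\chi(g_1,e_2)\,\chi(e_1,g_2)$, is justified exactly where you invoke it. The paper omits the proof entirely (it says the proof ``is simple and will be omitted''), and your argument is precisely the standard one the authors had in mind, so there is nothing to contrast.
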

Similarly, we also have such equivalent expression of the
nonlinearity of $f$ as
\[\mathcal{N}_{f}=2^{n-1}-\frac{1}{2}\max_{(a,b)\in\mathbb{F}_{2^{n_1}}\times\mathbb{F}_{2^{n_2}}}|
W_{f}(a,b)|.\]

\subsection{Algebraic immunity and immunity against FAA's of Boolean functions}

The notion of algebraic immunity of Boolean functions was introduced
in \cite{meier04} to measure the ability of LFSR-based pseudo-random
generators resisting algebraic attacks.
\begin{defi}
Let $f,g \in \mathbb{B}_{n}$. $g$ is called an annihilator of $f$ if
$fg=0$. The algebraic immunity  of $f$, $\ai(f)$, is defined to be
the smallest possible degree of the nonzero annihilators of $f$ or
$f+1$, i.e.
$$\ai(f)=\min_{0\neq g \in \mathbb{B}_{n}}\{\deg(g)\mid fg=0 \text{ or
}(f+1)g=0\}.$$
\end{defi}

It can be proved that the best possible value of the algebraic
immunity of $n$-variable Boolean functions is $\lceil{n}/{2}\rceil$
\cite{aa03}, thus functions attaining this upper bound are often
known as algebraic immunity optimal functions.

For a Boolean function $f\in\mathbb{B}_n$, optimal algebraic
immunity is necessary but not sufficient since when there exists a
function $g$ of low degree such that $gf$ is of a reasonable degree,
a fast  algebraic attack is feasible \cite{faa03}. In fact, $f$ is
 considered having best behavior against fast algebraic attacks if
 any pair of integers $(e,d)$ with $e<n/2$ and $e+d<n$ such that there  exists a nonzero function $g$
 of degree  $e$ satisfying that $gf$ is of degree  $d$, does not exist.

\section{Generalized Tu-Deng conjecture}

In \cite{tu} Tu and Deng proposed a combinatorial conjecture on
binary strings (known as the Tu-Deng conjecture now), based on which
they constructed a class of Boolean functions with optimal algebraic
immunity.

\begin{conj}[Tu-Deng]\label{TD}
Let $n=2k$ be an integer where $k\geq2$. For any $0\leq t \leq
2^k-2$, define
\[S_t=\left\{(a,b)\left|\begin{array}{c} 0\leq a,~ b\leq {2^k-2},\\
a+b\equiv t\md 2^k-1),\\
\wt_k(a)+\wt_k(b)\leq k-1
\end{array}\right.\right\}.\]
Then $|S_t|\leq 2^{k-1}$.
\end{conj}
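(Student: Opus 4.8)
The plan is to prove the bound $|S_t|\le 2^{k-1}$ by first cutting the family of $t$ down to a few representatives and then analysing binary carries with care. Multiplication by $2$ is a bijection of $\{0,\dots,2^k-2\}$ modulo $2^k-1$ that preserves $\wt_k$, so $|S_t|=|S_{2t}|$ and it suffices to bound $|S_t|$ for one representative of each cyclotomic coset modulo $2^k-1$. The extreme cases fall out immediately: for $t=0$ one has $\wt_k(a)+\wt_k(-a)=k$ for every $a\neq0$, so $S_0=\{(0,0)\}$. Writing the $2^k-1$ candidate pairs for a given $t$ as $(a,\,(t-a)\bmod(2^k-1))$, the complement map $(a,b)\mapsto(-a,-b)$ carries every pair of $S_t$ with no zero coordinate to a pair summing to $-t$ with $\wt_k(a)+\wt_k(b)\ge k+1$, which balances the count somewhat; but the instance $k=2$, $t=2$, where $|S_2|=2=2^{k-1}$, already shows that this symmetry alone cannot close the argument.

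The heart of the proof would be a carry decomposition together with the range constraint $a,b\le 2^k-2$. For $(a,b)$ summing to $t$ modulo $2^k-1$ one has either $a+b=t$ or $a+b=t+(2^k-1)$ over the integers, and in either case an identity of the shape $\wt_k(a)+\wt_k(b)=\wt_k(t)+(\text{carry term})$, where in the wrap-around case the carry term also absorbs the carry cascade produced by adding $t$ to the all-ones string. Hence $\wt_k(a)+\wt_k(b)\le k-1$ forces the number of carry positions in $a+b$ to be small, the more so as $\wt_k(t)$ grows; and the bound $a,b\le 2^k-2$ is exactly what prevents the wrap-around branch from contributing many low-carry pairs, since there both coordinates are pushed close to $2^k-1$. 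When $\wt_k(t)$ is large this leaves so few pairs that direct estimation suffices; for the intermediate range of $\wt_k(t)$ I would attempt an induction on $k$, conditioning on the least significant bits of $a$, $b$ and $t$, stripping one bit at a time while recording the carry into the next position, so that $|S_t|$ is expressed through analogous counts for strings of length $k-1$. Propagating the single carry across the cyclic boundary is where the technical weight of the argument lies.

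The main obstacle is the near-absence of slack. A short Vandermonde computation --- the same identity used in the proof of Proposition~\ref{birepdeg} --- gives $\sum_{t=0}^{2^k-2}|S_t|=\#\{(a,b)\in\{0,\dots,2^k-2\}^2:\wt(a)+\wt(b)\le k-1\}=\tfrac12\bigl(2^{2k}-\binom{2k}{k}\bigr)$, so the average of $|S_t|$ is $2^{k-1}\bigl(1-\Theta(1/\sqrt k)\bigr)$ and the conjectured maximum sits above it only by a multiplicative factor $1+O(1/\sqrt k)$. Any successful argument must therefore control the carry statistics of binary addition with essentially optimal precision, and neither a clean injection of $S_t$ into a fixed set of size $2^{k-1}$ nor an induction that closes uniformly over all $t$ appears to be available. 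A realistic partial outcome of the strategy above is the bound for all $t$ whose binary weight lies outside a central band, or for a set of $t$ of density tending to $1$; proving it for every $t$ is precisely what keeps the statement a conjecture.
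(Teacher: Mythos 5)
There is no proof to compare against: the statement you were given is Conjecture~\ref{TD}, which the paper does not prove and explicitly describes as completely open (it is the Tu--Deng conjecture, and the paper only records that it has been verified computationally in special cases and that a variant with $a$ replaced by $-2^l a$ was proved by Cohen and Flori). Your submission is also not a proof, and you say so yourself in the final sentence; so the honest verdict is that the gap is the entire middle of the argument. The parts you do carry out are correct: multiplication by $2$ modulo $2^k-1$ is a weight-preserving bijection, so $|S_t|=|S_{2t}|$ and one may restrict to cyclotomic coset representatives; $S_0=\{(0,0)\}$ because $\wt_k(a)+\wt_k(-a)=k$ for $a\neq0$; and the Vandermonde count $\sum_{t}|S_t|=\sum_{j=0}^{k-1}{2k\choose j}=\tfrac12\bigl(2^{2k}-{2k\choose k}\bigr)$ is right and correctly identifies why the problem is tight --- the conjectured bound exceeds the average only by a factor $1+O(1/\sqrt{k})$.

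The concrete missing step is the one you label the ``heart'' of the proof: the claim that the weight identity $\wt_k(a)+\wt_k(b)=\wt_k(t)+(\text{carry term})$, combined with a bit-stripping induction on $k$ that tracks the carry across the cyclic boundary, yields $|S_t|\le 2^{k-1}$ for every $t$. Nothing in your write-up shows that this induction closes; indeed your own averaging computation explains why it is unlikely to close uniformly --- any per-bit recursion loses constant factors, while the target bound has essentially no slack, and the extremal examples (such as $k=2$, $t=2$, where the bound is attained with equality) rule out any strictly contracting injection. A correct submission here would either supply the full combinatorial argument (which would be a substantial new result, settling the conjecture) or state plainly that the statement is conjectural and that only partial cases are known. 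As written, the proposal is a well-informed research plan, not a proof, and it should not be presented as establishing the statement.
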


As indicated in \cite[Remark~2]{tang}, this conjecture can be
generalized by replacing $a$ by $ua$ for any fixed integer $u$ with
$(u,2^k-1)$, and particularly, for the case $u=-2^l$ for some
integer $l\geq0$, a proof of this generalized conjecture can be
achieved \cite{cohen,jin}. Constructions of functions with optimal
algebraic immunity based on this generalized conjecture were also
obtained in \cite{jin}.

In the sequel we assume $n=(r+1)m$ for an odd integer $r\geq1$ and
an integer $m\geq3$, and pick an integer $u$ with $(u,2^m-1)$. We
propose a new combinatorial conjecture on binary strings which is a
more wide generalization of Conjecture \ref{TD}.

\begin{conj}\label{GTD}
For any $0\leq t \leq 2^m-2$, define
\[S_t=\left\{(a,b)\left|\begin{array}{c} 0\leq a\leq {2^{rm}-2},~0\leq b\leq {2^m-1},\\
ua+b\equiv t\md 2^m-1),\\
\wt_{rm}(a)+\wt_m(b)\leq n/2-1
\end{array}\right.\right\}.\]
Then $|S_t|\leq 2^{rm-1}$.
\end{conj}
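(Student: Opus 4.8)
The plan is to reduce Conjecture~\ref{GTD} to the already-proved generalized Tu-Deng conjecture (the case $u=-2^l$ of Conjecture~\ref{TD}, established in \cite{cohen,jin}) by exploiting the product structure $rm = r\cdot m$ of the exponent on the ``$a$''-side. First I would observe that the condition $ua+b\equiv t \md 2^m-1)$ only sees $a$ modulo $2^m-1$: since $2^m-1 \mid 2^{rm}-1$, writing $a$ in base $2^m$ as $a=\sum_{i=0}^{r-1} a_i 2^{mi}$ with $0\le a_i\le 2^m-1$ gives $a\equiv \sum_i a_i \md 2^m-1)$ and $\wt_{rm}(a)=\sum_{i=0}^{r-1}\wt_m(a_i)$ (up to the usual care when $a=2^{rm}-1$, which forces all $a_i=2^m-1$ and must be treated as the zero residue). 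Thus $S_t$ is in bijection with the set of tuples $(a_0,\dots,a_{r-1},b)$ with each entry in $\{0,\dots,2^m-1\}$, satisfying $u(a_0+\cdots+a_{r-1})+b\equiv t\md 2^m-1)$ and $\sum_i\wt_m(a_i)+\wt_m(b)\le n/2-1 = \tfrac{(r+1)m}{2}-1$.

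Next I would try to collapse the $r$-fold sum. Because $(u,2^m-1)=1$, for fixed residue classes $c_1\equiv a_1+\cdots+a_{r-1}$ the inner problem in $(a_0,b)$ is exactly an instance of the generalized Tu-Deng set with modulus $2^m-1$, shift parameter $t-uc_1$, and weight budget $\tfrac{(r+1)m}{2}-1-\sum_{i\ge1}\wt_m(a_i)$. The difficulty is that the weight budget here exceeds $m-1$ (the budget in Conjecture~\ref{TD}), so one cannot apply the known bound directly term-by-term; instead one needs a ``thresholded'' version. I would therefore prove, or invoke from \cite{jin}, the refined statement that for the generalized Tu-Deng sets $S_t^{(m)}$ with weight bound $w$ one has $|S_t^{(m)}(w)|\le \sum_{k=0}^{?}\binom{m}{k}\cdot(\text{something})$ — more usefully, a bound of the form: the number of pairs $(a,b)\in\{0,\dots,2^m-1\}^2$ with $ua+b\equiv t$ and $\wt_m(a)+\wt_m(b)\le w$ is at most $\tfrac12\cdot 2^m\cdot(\text{fraction depending on }w)$. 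Then I would sum over the remaining coordinates, controlling $\sum_{i\ge1}\wt_m(a_i)$, and use Vandermonde-type convolution identities (as in the proof of Proposition~\ref{birepdeg}) to recombine $\sum_{k_0+\cdots+k_{r-1}+j\le n/2-1}\prod\binom{m}{k_i}\binom{m}{j}$ into something bounded by $2^{rm-1}$.

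A cleaner route, which I would attempt first, is an induction on $r$ (over odd $r$, or over all $r$ with a two-step induction): assuming the bound $|S_t|\le 2^{(r-2)m-1}$ holds for the exponent $(r-2)m$ on the $a$-side with total weight budget $\tfrac{(r-1)m}{2}-1$, peel off two of the blocks $a_{r-2},a_{r-1}$, bound the number of pairs $(a_{r-2},a_{r-1})$ with prescribed residue sum and weight $\le s$ by the trivial count $\le 2^m\cdot\binom{2m}{\le s}/\text{stuff}$, and feed the residue back into the shift $t$. The base case $r=1$ is exactly the generalized Tu-Deng conjecture proved in \cite{cohen,jin}. The main obstacle in either approach is the bookkeeping of the weight budget: the global bound $\wt_{rm}(a)+\wt_m(b)\le \tfrac{(r+1)m}{2}-1$ does not split evenly across the $r$ blocks, so the argument cannot be a naive product of per-block Tu-Deng bounds; one must track how much of the budget is ``spent'' on the peeled-off blocks versus what remains, and show the worst case still respects $2^{rm-1}$. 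I expect that handling this uneven split — and in particular showing that the extremal configuration concentrates the weight in a way that the binomial-convolution recombination can absorb — is where the real work lies, and it is plausible that for general $u$ (not of the form $-2^l$) this is exactly why the statement remains only a conjecture.
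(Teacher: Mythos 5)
First, a point of calibration: the paper does not prove this statement. It is presented as a conjecture, supported only by computer verification for a handful of small parameter sets ($r=3$ with $m\le 8$, $r=5$ with $m\le 4$, $r=7$ with $m=3$), and the authors explicitly state that a proof is completely open --- even for the original Tu--Deng conjecture, which is a very special case, and even for the seemingly tractable case $u=-2^l$ with $r>1$. So there is no proof in the paper to compare yours against; the only fair question is whether your sketch actually closes the problem. It does not, and you essentially concede this in your final sentences.

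The concrete gaps are these. (1) The base case of your proposed induction is not available: the $r=1$ case of Conjecture~\ref{GTD} is the generalized Tu--Deng conjecture for \emph{arbitrary} $u$ coprime to $2^m-1$, and \cite{cohen,jin} establish it only for $u=-2^l$; for general $u$ (including $u=1$, i.e.\ the original Tu--Deng conjecture) it remains open, so the induction cannot be anchored there. (2) The ``thresholded'' refinement of the Tu--Deng bound --- a count of pairs with $\wt_m(a)+\wt_m(b)\le w$ for weight budgets $w$ exceeding $m-1$ --- is something you propose to ``prove, or invoke from \cite{jin},'' but no such statement exists in \cite{jin} and you do not supply one; it is exactly the missing quantitative input. (3) Even granting such a refinement, the recombination step (summing per-block bounds over all ways of distributing the budget $n/2-1$ among $r+1$ blocks and showing the resulting convolution stays below $2^{rm-1}$) is asserted rather than carried out, and your own remark that the budget ``does not split evenly across the blocks'' identifies precisely why a naive product of per-block bounds fails. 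Your opening structural observations --- that the congruence only sees $a$ modulo $2^m-1$ and that $\wt_{rm}(a)=\sum_{i=0}^{r-1}\wt_m(a_i)$ under the base-$2^m$ decomposition (with care at $a=2^{rm}-1$, which in any case is excluded by the range $0\le a\le 2^{rm}-2$) --- are correct and are a reasonable first step, but everything after them is a research programme, not a proof.
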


\begin{rmk}\label{rkgtd}
It is easy to see that Conjecture \ref{GTD} generalizes the
conjecture proposed in \cite[Remark~2]{tang} (see also
\cite[Conjecture~3.3]{jin}) and of course, Conjecture \ref{TD}.
Indeed, the conjecture in \cite[Remark~2]{tang} can be viewed as the
$r=1$ case of Conjecture \ref{GTD} since in this case, the
cardinality of $S_t$ will not be affected if the restriction $0\leq
b\leq 2^m-1$ is replaced by $0 \leq b\leq 2^m-2$ for any $0\leq t
\leq 2^m-2$. Therefore, when $r=1$ and $u=-2^l$ for some integer
$l\geq0$, the conjecture is true according to \cite{cohen}.
\end{rmk}

We have checked the conjecture by computer experiments for (1)
$r=3$, $m=3,4,5,6,7$; (2) $r=5$, $m=3,4$; and (3) $r=7$, $m=3$, for
any $u$ with $(u,2^m-1)=1$, and for $r=3$, $m=8$ for $u=1$. Seeking
a proof of this conjecture, even the Tu-Deng conjecture which is a
very special case of it, is completely open. In addition, in the
case $r>1$ and $u=-2^l$ for some integer $l\geq0$, it seems
difficult to prove this conjecture though this can be done for
$r=1$.

\section{A class of unbalanced functions}

In the sequel, we fix a primitive element $\alpha$ of
$\mathbb{F}_{2^{rm}}$ and set $\beta=\alpha^{(2^{rm}-1)/(2^m-1)}$,
which is a primitive element of $\mathbb{F}_{2^{m}}$. For any
integer $0\leq s\leq 2^{rm}-2$, we denote $\Delta_s=\{\alpha^i\mid
s\leq i \leq s+2^{rm-1}-1\}$.

\begin{const}\label{const1}
Let $0\leq s\leq 2^{rm}-2$ be an integer. Define an $n$-variable
Boolean function
$f:\mathbb{F}_{2^{rm}}\times\mathbb{F}_{2^{m}}\rightarrow\mathbb{F}_{2}$
by setting
\[\supp(f)=\{(\gamma y^u,y)\mid y\in\mathbb{F}_{2^{m}}^*,~\gamma\in\Delta_s\}.\]
\end{const}

\begin{rmk}\label{rkbirep}
It is easy to see that the bivariate representation of $f$ over
$\mathbb{F}_{2^{rm}}\times\mathbb{F}_{2^{m}}$ can be written as
\[f(x,y)=g\left(\frac{x}{y^u}\right),\]
where $g$ is an $(rm)$-variable Boolean function with
$\supp(g)=\Delta_s$ (note that we always distinguish $x/0$ with $0$
in a finite field). We can see that this function can actually be
viewed as a $(2rm)$-variable generalized Tu-Deng function (i.e. a
function from \cite[Construction~4.1]{jin}) with the second
coordinate $y$ limited to the subfield $\mathbb{F}_{2^{m}}$ of
$\mathbb{F}_{2^{rm}}$. In particular, when $r=1$, it coincides with
the unbalanced generalized Tu-Deng function (see
\cite[Construction~4.1]{jin}).
\end{rmk}

  In the
following we discuss some  properties of the function defined in
Construction \ref{const1}.

\subsection{Bivariate representation and algebraic degree}

\begin{lem}\label{degn-1}
Let
$h:\mathbb{F}_{2^{rm}}\times\mathbb{F}_{2^m}\rightarrow\mathbb{F}_2$
be an $n$-variable Boolean function. Then $\deg h\leq n-2$ if and
only if $\wt(h)$ is even and
\[\sum_{(c_1,c_2)\in\Supp(h)}c_1=\sum_{(c_1,c_2)\in\Supp(h)}c_2=0.\]
\end{lem}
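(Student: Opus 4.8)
The plan is to pass to the bivariate representation of $h$ over $\mathbb{F}_{2^{rm}}\times\mathbb{F}_{2^m}$, say $h(x,y)=\sum_{i,j}h_{i,j}x^iy^j$ with $0\le i\le 2^{rm}-1$, $0\le j\le 2^m-1$, and to read off $\deg h$ from Proposition~\ref{birepdeg}: $\deg h=\max\{\wt_{rm}(i)+\wt_m(j)\mid h_{i,j}\neq0\}$. Since $n=(r+1)m$ and the only monomials of degree $n$ or $n-1$ are those involving the top exponent $2^{rm}-1$ in $x$ and/or $2^m-1$ in $y$, the condition $\deg h\le n-2$ is equivalent to the vanishing of exactly three coefficients: $h_{2^{rm}-1,\,2^m-1}$ (the unique degree-$n$ term), $h_{2^{rm}-1,\,j_0}$ for the index $j_0$ with $\wt_m(j_0)=m-1$ among $0\le j\le 2^m-2$ — wait, more carefully, degree $n-1$ can be achieved either by $i=2^{rm}-1$ together with any $j$ of weight $m-1$, or by $j=2^m-1$ together with any $i$ of weight $rm-1$. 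So I would instead argue directly: I would compute these three relevant coefficients via the Lagrange interpolation formula and show that their vanishing is governed by the three stated scalar conditions.

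The key computation is the standard one: from $\supp(h)$ one has $h(x,y)=\sum_{(c_1,c_2)\in\Supp(h)}[1+(x+c_1)^{2^{rm}-1}][1+(y+c_2)^{2^m-1}]$. Expanding $(x+c_1)^{2^{rm}-1}=\sum_{i=0}^{2^{rm}-1}\binom{2^{rm}-1}{i}x^i c_1^{2^{rm}-1-i}$, and similarly for $y$, and using that $\binom{2^{rm}-1}{i}\equiv1\pmod2$ for all $i$ (Lucas), one finds the top coefficient $h_{2^{rm}-1,\,2^m-1}=\wt(h)\bmod 2$; the coefficient of $x^{2^{rm}-1}y^0$ contributes $\sum c_2^{2^m-1}=\#\{(c_1,c_2)\in\Supp(h): c_2\neq0\}$ and more relevantly the coefficient of $x^{2^{rm}-1}y^{2^m-2}$ equals $\sum_{(c_1,c_2)\in\Supp(h)}c_2$ (since $c_2^{2^m-1-(2^m-2)}=c_2$), and likewise the coefficient of $x^{2^{rm}-2}y^{2^m-1}$ equals $\sum_{(c_1,c_2)\in\Supp(h)}c_1$. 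The point is that $\wt_m(2^m-2)=m-1$ and $\wt_{rm}(2^{rm}-2)=rm-1$, so these are precisely the monomials of degree $n-1$ whose vanishing, together with that of the degree-$n$ monomial, must be controlled; one then checks that the other degree-$(n-1)$ coefficients — those with $i=2^{rm}-1$ and $\wt_m(j)=m-1$, $j\neq 2^m-2$, or with $j=2^m-1$ and $\wt_{rm}(i)=rm-1$, $i\neq 2^{rm}-2$ — all reduce to sums of the form $\sum c_1^{r_1}$ or $\sum c_2^{r_2}$ which, via the Frobenius (squaring) automorphism applied to the two master equations $\sum c_1=0$ and $\sum c_2=0$, vanish automatically once those two do (e.g. $\sum c_2^{2^k}=(\sum c_2)^{2^k}=0$, and the relevant exponents $2^m-1-j$ are powers of two exactly when $j$ has weight $m-1$). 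This is the routine but slightly fiddly bookkeeping I would spell out.

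The main obstacle is exactly this last verification: ensuring that \emph{every} potential degree-$(n-1)$ monomial, not just the two ``extremal'' ones $x^{2^{rm}-1}y^{2^m-2}$ and $x^{2^{rm}-2}y^{2^m-1}$, has a coefficient expressible as a power of $\sum c_1$ or of $\sum c_2$ (so that two scalar conditions suffice to kill all of them), and that the single degree-$n$ monomial's coefficient is exactly the parity of $\wt(h)$. The forward direction (if $\deg h\le n-2$ then the three conditions hold) is then immediate by looking at these three coefficients; the converse requires the Frobenius argument above to conclude that all high-degree coefficients vanish. I would present the coefficient extraction cleanly using Lucas' theorem for the binomial coefficients mod $2$ and the identity $c^{2^m-1}=1$ for $c\in\mathbb{F}_{2^m}^*$ (and $=0$ for $c=0$), which is what makes the sums collapse to the stated linear forms. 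No genuinely deep input is needed beyond Proposition~\ref{birepdeg}; the content is purely in organizing the exponent bookkeeping.
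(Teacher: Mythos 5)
Your proposal is correct and follows essentially the same route as the paper: expand $h$ via Lagrange interpolation over $\mathbb{F}_{2^{rm}}\times\mathbb{F}_{2^m}$ and read off the coefficients of the degree-$n$ and degree-$(n-1)$ monomials, which are $\wt(h)\bmod 2$, $\sum c_1$ and $\sum c_2$. In fact you are more thorough than the paper, which only exhibits the two extremal monomials $x^{2^{rm}-2}y^{2^m-1}$ and $x^{2^{rm}-1}y^{2^m-2}$ and omits the Frobenius observation $\sum c^{2^k}=(\sum c)^{2^k}$ needed to kill the remaining degree-$(n-1)$ coefficients in the converse direction.
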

\begin{IEEEproof}
By Lagrange interpolation, the bivariate representation of $h$ over
$\mathbb{F}_{2^{rm}}\times\mathbb{F}_{2^m}$ can be written as
\begin{eqnarray*}
   &&h(x,y)\\
   &=& \sum_{(c_1,c_2)\in\Supp(h)}\left[1+(x+c_1)^{2^{rm}-1}\right]\left[1+(x+c_2)^{2^m-1}\right] \\
   &=&|\supp(h)|+\sum_{(c_1,c_2)\in\Supp(h)}(x+c_1)^{2^{rm}-1}\\&&+\sum_{(c_1,c_2)\in\Supp(h)}(x+c_2)^{2^{m}-1}\\
   &&+\sum_{(c_1,c_2)\in\Supp(h)}(x+c_1)^{2^{rm}-1}(x+c_2)^{2^{m}-1}.
\end{eqnarray*}
The coefficient of $x^{2^{rm}-1}y^{2^m-1}$, whose degree is $n$, is
$|\supp(h)|\mod2$; the coefficients of $x^{2^{rm}-2}y^{2^m-1}$ and
$x^{2^{rm}-1}y^{2^m-2}$, whose degrees are $n-1$, are
$\sum_{(c_1,c_2)\in\Supp(h)}c_1$ and
$\sum_{(c_1,c_2)\in\Supp(h)}c_2$ respectively. This completes the
proof.
\end{IEEEproof}

\begin{lem}\label{wt2m-1j}
Let $1\leq j\leq (2^{rm}-1)/(2^{m}-1)-1$ be an integer. Then
$\wt_{rm}((2^m-1)j)\leq rm-m$.
\end{lem}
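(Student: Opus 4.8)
The plan is to analyze the base-$2$ digits of $(2^m-1)j$ when it is reduced modulo $2^{rm}-1$. Write $N = 2^{rm}-1$ and note $N/(2^m-1) = 1 + 2^m + 2^{2m} + \cdots + 2^{(r-1)m}$, so the hypothesis says $1 \le j < 1 + 2^m + \cdots + 2^{(r-1)m}$. The key observation is that $(2^m-1)j \pmod{N}$, viewed as a length-$rm$ binary string, has a very constrained structure: multiplying by $2^m-1 = 2^m + 2^{2m} + \cdots$ wait—more simply, $2^m - 1$ has binary expansion consisting of $m$ ones, and $j < N/(2^m-1)$ means $(2^m-1)j < N$, so no reduction modulo $N$ is needed and $\wt_{rm}((2^m-1)j)$ is literally the number of ones in the ordinary binary expansion of the integer $(2^m-1)j$, which is a nonnegative integer strictly less than $2^{rm}-1$.

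First I would reduce to showing: for any integer $j$ with $1 \le j \le N/(2^m-1) - 1$, the integer $(2^m-1)j$ has at most $rm - m$ ones in its binary expansion (it fits in $rm$ bits since $(2^m-1)j < (2^m-1)\cdot N/(2^m-1) = N < 2^{rm}$). The natural tool here is the classical fact that for a positive integer $a$, the number of carries when adding $a$ and $2^m a$ in base $2$ relates $\wt(a) + \wt(2^m a) = \wt((2^m+1)a) + (\text{carries})$; but it is cleaner to use $(2^m - 1)j = 2^m j - j$ and count: the number of borrows in the subtraction $2^m j - j$. Alternatively, and I think most transparently, I would write $(2^m-1)j = (2^m-1)j$ and use the identity $\wt_{rm}((2^m-1)j) + \wt_{rm}(j) + \wt_{rm}(-2^m j \bmod N)$—no. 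Let me instead use the cleanest route: since $j \ge 1$, we have $(2^m-1) \mid (2^m-1)j$ and $(2^m-1)j \le N - (2^m-1) = 2^{rm} - 2^m$, hence $(2^m-1)j \le 2^{rm}-2^m$, so the top $m$ bits (positions $rm-1, \ldots, rm-m$) of the binary expansion of $(2^m-1)j$ are not all available—more precisely, writing $(2^m-1)j$ in $rm$ bits, I claim the high-order block of $m$ bits contains at least one zero. Indeed $(2^m-1)j \le 2^{rm} - 2^m < 2^{rm} - 2^{m-1} - \cdots$, hmm, I need $(2^m-1)j < 2^{rm} - 2^{rm-1} \cdot (1 - 2^{-m})$ type bound to force a zero bit high up; since $(2^m-1)j \le 2^{rm} - 2^m$, certainly bit position $rm - 1$ could still be $1$. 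So the bound on the magnitude alone is not immediately enough, and I expect this to be the main obstacle.

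To overcome it, I would exploit the divisibility $(2^m-1) \mid (2^m-1)j$ more carefully, via the following structural fact: an integer $M$ with $0 < M < N = 2^{rm}-1$ is divisible by $2^m - 1$ if and only if the sum of its $r$ length-$m$ digit-blocks (in base $2^m$) is divisible by $2^m-1$; combined with $0 < M < N$, this block-sum lies in $\{2^m-1, 2(2^m-1), \ldots, (r-1)(2^m-1)\}$ — it cannot be $0$ (as $M>0$) nor $r(2^m-1)$ (as $M < N$ forces not all blocks equal to $2^m-1$). Writing $M = \sum_{i=0}^{r-1} M_i 2^{im}$ with $0 \le M_i \le 2^m-1$, we get $\sum M_i = k(2^m-1)$ for some $1 \le k \le r-1$, and $\wt_{rm}(M) = \sum_{i=0}^{r-1} \wt_m(M_i)$. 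The task becomes: given $0 \le M_i \le 2^m - 1$ with $\sum M_i = k(2^m-1)$, $1 \le k \le r-1$, show $\sum \wt_m(M_i) \le rm - m = (r-1)m$. Since each $\wt_m(M_i) \le m$ with equality iff $M_i = 2^m-1$, and $\sum M_i = k(2^m-1)$ with $k \le r-1$, at most $r-1$ of the blocks can equal $2^m-1$ wait that gives only $\le (r-1)m$ if exactly those are full and the rest zero; in general I would argue by a convexity/extremal count: the number of indices $i$ with $M_i = 2^m-1$ is at most $k \le r-1$ (since if $k+1$ blocks were full their sum already exceeds $k(2^m-1)$ unless the others are "negative", impossible), and every other block contributes $\le m-1$... that still only bounds by $r\cdot m$. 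The sharp argument: $\sum_i \wt_m(M_i) \le \sum_i \min(M_i, m)$... no — use instead that $\sum_i \wt_m(M_i) \le \sum_i M_i / 1$ is false for small $M_i$. The correct finish is: for each block, $\wt_m(M_i) \le M_i$ when $M_i \le m$... still messy. I would instead directly bound $\wt_{rm}(M) = \wt_{rm}\big(k(2^m-1) \text{ spread across blocks with carries}\big)$ by noting $M \le N - (2^m-1)$ forces $M$, as an $rm$-bit string, to have at least $m$ zero bits, because $M \equiv 0 \pmod{2^m-1}$ and $M \ne N$: the complement $\bar M = N - M$ is a positive multiple of $2^m-1$ of the same flavor, $\wt_{rm}(\bar M) = rm - \wt_{rm}(M)$, and by symmetry $\bar M$ also satisfies the hypotheses, so it suffices to show $\wt_{rm}(\bar M) \ge m$, i.e. every positive multiple of $2^m-1$ below $N$ has at least $m$ nonzero bits — which follows since $2^m - 1$ itself has exactly $m$ ones and is the smallest positive multiple, and any positive multiple $\ge 2^m-1$ has $\wt \ge \wt(2^m-1) = m$ by the standard fact $\wt(\text{multiple of } 2^\ell - 1 \text{ that is positive}) \ge \ell$ provided the multiple is less than $2^{\text{(length)}}$ and, crucially, is itself $\ge 2^\ell - 1$. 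Thus $\wt_{rm}(M) = rm - \wt_{rm}(\bar M) \le rm - m$, as required. I would verify the cited standard fact (a positive multiple of $2^m-1$ has at least $m$ ones in binary) as a short lemma using the base-$2^m$ block decomposition: if $\sum_i M_i = k(2^m-1)$ with $k \ge 1$ and not all $M_i$ zero, then since digit sums of blocks total at least $2^m-1$, and $\wt_m$ on each block satisfies $\sum \wt_m(M_i) \ge$ ... here I'd instead just invoke it as elementary, or prove: $M = (2^m-1)j$, in binary, equals $2^m j - j$; the subtraction $2^m j - j$ has exactly $\wt(j)$ "borrow chains" and one checks $\wt(2^m j - j) = \wt(2^m j) = \wt(j) \ge 1$ is false — rather $\wt(2^m j - j) \ge m$ whenever $1 \le j$, because the bottom $m$ bits of $2^m j - j$ are the two's-complement-style tail $2^m - (j \bmod 2^m)$ pattern... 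I will finalize the block-sum argument, which is the robust one, and this last elementary lemma is where I expect to spend most of the effort.
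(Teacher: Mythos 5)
Your reduction is exactly the paper's: since $2^{rm}-1-(2^m-1)j=(2^m-1)\bigl(\tfrac{2^{rm}-1}{2^m-1}-j\bigr)$ is the bitwise complement of $(2^m-1)j$ in $rm$ bits and the complementary index is again in the admissible range, the claimed upper bound $rm-m$ is equivalent to the lower bound $\wt_{rm}((2^m-1)j)\ge m$ for all $1\le j\le \tfrac{2^{rm}-1}{2^m-1}-1$. Up to this point the proposal coincides with the paper's proof.

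The genuine gap is that you never actually establish this lower bound. You invoke it as ``the standard fact that a positive multiple of $2^m-1$ has at least $m$ ones'' and then offer three sketches, none completed: the base-$2^m$ block argument trails off (as you yourself observe, the naive bounds on $\sum_i\wt_m(M_i)$ do not close); the borrow-counting identity is stated and abandoned; and the claim that the bottom $m$ bits of $2^mj-j$ form a tail forcing many ones is false for even $j$ (take $j=2^{m-1}$: the low $m$ bits of $(2^m-1)2^{m-1}$ are $10\cdots0$, of weight $1$). The paper closes this in two lines: first reduce to $j$ odd, which is legitimate because multiplication by $2$ modulo $2^{rm}-1$ cyclically shifts the $rm$-bit string and hence preserves $\wt_{rm}$, while the odd part of $j$ stays in range; then apply $\wt(a-b)=\wt(a)-\wt(b)+\mathfrak{B}(a,b)$, where $\mathfrak{B}(a,b)$ counts borrows, to get $\wt_{rm}(2^mj-j)=\mathfrak{B}(2^mj,j)$, which is at least $m$ because the low bit of $j$ is $1$ while the low $m$ bits of $2^mj$ vanish, so the first borrow must propagate through at least $m$ positions. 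Some such argument is indispensable: the entire content of the lemma lives in precisely the step you defer.
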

\begin{IEEEproof}
For any $1\leq j\leq (2^{rm}-1)/(2^{m}-1)-1$, it is obvious that
\[\wt_{rm}\left((2^m-1)\left(\frac{2^{rm}-1}{2^{m}-1}-j\right)\right)=rm-\wt_{rm}((2^m-1)j).\]
Thus we need only to prove that $\wt_{rm}((2^m-1)j)\geq m$ for any
$1\leq j\leq (2^{rm}-1)/(2^{m}-1)-1$. Without loss of generality, we
can assume $j$ is odd. Denote by $\mathfrak{B}(a,b)$ the number of
borrows when calculating $a-b$ for two positive integers $a$ and $b$
with $a\geq b$. Then we have
\begin{eqnarray*}
  \wt_{rm}(2^mj-j) &=& \wt_{rm}(2^mj)-\wt_{rm}(j)+\mathfrak{B}(2^mj,j) \\
   &=&\mathfrak{B}(2^mj,j).
\end{eqnarray*}
It is easy to see that $\mathfrak{B}(2^mj,j)\geq m$ since $j$ is
odd.
\end{IEEEproof}

\begin{thm}\label{nbdeg}
Let $f$ be the Boolean function defined in Construction
\ref{const1}.  Then the bivariate representation of $f$ over
$\mathbb{F}_{2^{rm}}\times\mathbb{F}_{2^{m}}$ is
\begin{eqnarray*}
f(x,y)&=&\sum_{i=1\atop (2^m-1)\nmid
i}^{2^{rm}-2}\alpha^{-is}(1+\alpha^{-i})^{2^{rm-1}-1} x^iy^{2^m-1-\overline{ui}}\\
&&+\sum_{j=1}^{\frac{2^{rm}-1}{2^m-1}-1}
\alpha^{-(2^m-1)js}(1+\alpha^{-(2^m-1)j})^{2^{rm-1}-1}\\
&&\qquad\qquad\qquad\qquad\qquad\qquad\times\,
x^{(2^m-1)j}y^{2^m-1},
\end{eqnarray*}
where $\overline{ui}$ denotes the reduction of $ui$ modulo $(2^m-1)$
in the residue class $\{0,1,\ldots,2^m-2\}$ for any integer $1\leq
i\leq2^{rm}-2$. Therefore, $n-m\leq\deg f\leq n-2$.
\end{thm}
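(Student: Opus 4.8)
The plan is to compute the bivariate representation of $f$ directly by Lagrange interpolation over $\mathbb{F}_{2^{rm}}\times\mathbb{F}_{2^m}$, and then to extract both degree bounds from Proposition~\ref{birepdeg}. Starting from
\[f(x,y)=\sum_{(c_1,c_2)\in\Supp(f)}\bigl(1+(x+c_1)^{2^{rm}-1}\bigr)\bigl(1+(y+c_2)^{2^{m}-1}\bigr),\]
I would substitute $\Supp(f)=\{(\gamma z^{u},z)\mid z\in\mathbb{F}_{2^{m}}^{*},\ \gamma\in\Delta_s\}$. Since the binomial coefficients $\binom{2^{k}-1}{\ell}$ are all odd and since $\gamma z^{u}\neq0$, $z^{2^{m}-1}=1$, the two factors expand, their constant terms cancelling the outer ones, as
\[1+(x+\gamma z^{u})^{2^{rm}-1}=\sum_{i=1}^{2^{rm}-1}\gamma^{-i}z^{-ui}x^{i},\qquad 1+(y+z)^{2^{m}-1}=\sum_{j=1}^{2^{m}-1}z^{-j}y^{j}.\]
Multiplying out and interchanging the summations, the coefficient of $x^{i}y^{j}$ becomes $\bigl(\sum_{\gamma\in\Delta_s}\gamma^{-i}\bigr)\bigl(\sum_{z\in\mathbb{F}_{2^{m}}^{*}}z^{-ui-j}\bigr)$. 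The first factor is the geometric sum $\sum_{k=s}^{s+2^{rm-1}-1}\alpha^{-ik}$, which equals $\alpha^{-is}(1+\alpha^{-i})^{2^{rm-1}-1}$ when $2^{rm}-1\nmid i$ and vanishes when $i=2^{rm}-1$; the second equals $1$ if $(2^{m}-1)\mid(ui+j)$ and $0$ otherwise. As $(u,2^{m}-1)=1$, for each $i$ with $1\le i\le2^{rm}-2$ there is a unique $j\in\{1,\ldots,2^{m}-1\}$ with $(2^{m}-1)\mid(ui+j)$: it is $j=2^{m}-1-\overline{ui}$ when $(2^{m}-1)\nmid i$, and $j=2^{m}-1$ when $i$ is a multiple of $2^{m}-1$. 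Separating these two cases yields precisely the stated formula.

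For the upper bound $\deg f\le n-2$ I would apply Proposition~\ref{birepdeg} and check that every monomial $x^{i}y^{j}$ in the formula satisfies $\wt_{rm}(i)+\wt_{m}(j)\le n-2$. In the first sum, $1\le i\le2^{rm}-2$ gives $\wt_{rm}(i)\le rm-1$ and $\overline{ui}\in\{1,\ldots,2^{m}-2\}$ gives $\wt_{m}(2^{m}-1-\overline{ui})\le m-1$, a total of at most $(r+1)m-2=n-2$; in the second sum, $\wt_{m}(2^{m}-1)=m$ and $\wt_{rm}\bigl((2^{m}-1)j\bigr)\le rm-m$ by Lemma~\ref{wt2m-1j}, a total of at most $rm\le n-2$ (using $m\ge3$). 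Hence $\deg f\le n-2$. (Equivalently, one may invoke Lemma~\ref{degn-1}: $\wt(f)=(2^{m}-1)2^{rm-1}$ is even and the two coordinate sums over $\Supp(f)$ both vanish, since $2^{rm-1}$ is even and $(2^{m}-1)\nmid u$.)

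For the lower bound $\deg f\ge n-m$ I would exhibit one monomial of degree exactly $rm$ with nonzero coefficient. Let $c$ be the reduction of $-u^{-1}$ modulo $2^{m}-1$, so $c\in\{1,\ldots,2^{m}-2\}$ (it is nonzero because $(u,2^{m}-1)=1$), and set $i=2^{rm}-1-c$. Then $ui\equiv1\md 2^{m}-1)$, so $\overline{ui}=1$, the exponent $i$ is not a multiple of $2^{m}-1$, and $1\le i\le2^{rm}-2$; hence the coefficient $\alpha^{-is}(1+\alpha^{-i})^{2^{rm-1}-1}$ of $x^{i}y^{2^{m}-2}$ is nonzero. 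Moreover $c\le2^{m}-2<2^{rm}-1$, so $\wt_{rm}(i)=\wt_{rm}(-c)=rm-\wt_{rm}(c)=rm-\wt_{m}(c)\ge rm-(m-1)$, while $\wt_{m}(2^{m}-2)=m-1$; therefore $\wt_{rm}(i)+\wt_{m}(2^{m}-2)\ge rm=n-m$, and Proposition~\ref{birepdeg} gives $\deg f\ge n-m$.

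The bulk of the work is the expansion in the first step, and it is routine; the points that need care are keeping track of which exponents $i$ are divisible by $2^{m}-1$ — this is exactly what splits the formula into its two sums — and, for the lower bound, choosing the right witness monomial. I expect the latter to be the only genuinely subtle point: taking $i$ to be the $rm$-bit complement of the short residue of $-u^{-1}$ modulo $2^{m}-1$ handles all odd $r\ge1$ at once, avoiding a separate argument for $r=1$, where the second sum in the formula is empty and the degree must be located inside the first sum.
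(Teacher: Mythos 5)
Your proposal is correct and follows essentially the same route as the paper: obtain the bivariate representation as a product of two geometric sums (over $\Delta_s$ and over $\mathbb{F}_{2^m}^*$), then read off the degree bounds from Proposition~\ref{birepdeg} together with Lemma~\ref{wt2m-1j}. The only differences are cosmetic: the paper gets the formula by quoting the univariate representation of $g$ from \cite{tang} and substituting $x/y^u$ as in Remark~\ref{rkbirep} rather than redoing the Lagrange interpolation, it bounds both partial degrees by $n-2$ via Lemma~\ref{degn-1} rather than by direct inspection of the exponents, and for the lower bound it uses the simpler witness $i=2^{rm}-2$ (whose monomial has degree $n-1-\wt_m(\overline{-u})\geq n-m$) in place of your $i=2^{rm}-1-\overline{-u^{-1}}$; both witnesses are valid.
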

\begin{IEEEproof}
From the proof of \cite[Theorem~2]{tang} we know that the univariate
representation of the $(rm)$-variable function $g$ defined in Remark
\ref{rkbirep} is
\[g(x)=\sum_{i=1}^{2^{rm}-2}\alpha^{-is}(1+\alpha^{-i})^{2^{rm-1}-1}x^i.\]
Then the bivariate representation of $f$ follows from Remark
\ref{rkbirep}.

The algebraic degree of $f$ is $\max\{d_1,d_2\}$, where
\[d_1=\max\left\{\wt_{rm}(i)+\wt_{m}(2^m-1-\overline{ui})\left|\begin{array}{cc}
 1\leq i\leq 2^{rm}-2,\\(2^m-1)\nmid i\end{array}\right.\right\}\]
and
\[d_2=\max\left\{\wt_{rm}((2^m-1)j)+m\biggm| 1\leq j\leq \frac{2^{rm}-1}{2^m-1}-1\right\}.\]
By Lemma \ref{degn-1} we can get $d_1,d_2\leq n-2$. When
$i=2^{rm}-2$,
$\wt_{rm}(i)+\wt_{m}(2^m-1-\overline{ui})=rm-1+m-\wt_m(\overline{ui})=n-(\wt_m(\overline{-u})+1)$,
hence $n-m\leq d_1\leq n-2$. On the other hand, when
$j=(2^{rm}-1)/(2^m-1)-1$, $\wt_{rm}((2^m-1)j)=rm-m$, hence  we have
$d_2= rm=n-m$  from Lemma \ref{wt2m-1j}. Finally we get that
$n-m\leq\deg f\leq n-2$.
\end{IEEEproof}

\begin{rmk}
From the proof of Theorem \ref{nbdeg} we can see that:\\
 \noindent(1)
when $u=2^t$ for some non-negative integer $t$, $\deg
f=n-m$; and \\
(2) when $u=-2^t$ for some non-negative integer $t$, $\deg f=n-2$.
\end{rmk}

\begin{cor}\label{bentness}
Let $f$  be the Boolean function defined in Construction
\ref{const1}. Then $f$ is bent if and only if $r=1$ and $u=2^t$ for
some non-negative integer $t$.
\end{cor}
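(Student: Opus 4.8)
The plan is to characterize bentness of $f$ via its Walsh spectrum and reduce to counting. Recall that an $n$-variable Boolean function is bent precisely when $|W_f(a,b)| = 2^{n/2}$ for all $(a,b) \in \mathbb{F}_{2^{rm}} \times \mathbb{F}_{2^m}$, which in particular forces $n$ to be even (here $n = (r+1)m$, even iff $r$ is odd or $m$ is even). Since we already know from Theorem \ref{nbdeg} that $\deg f \geq n-m$, and since a bent function in $n$ variables has algebraic degree at most $n/2$, a necessary condition for bentness is $n - m \leq n/2$, i.e. $m \geq n/2 = (r+1)m/2$, i.e. $r \leq 1$, hence $r = 1$. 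This disposes of the ``only if'' direction on the $r$ side almost immediately using only results already proved.

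So the remaining work is entirely in the case $r = 1$, where $f$ is an $n = 2m$-variable function of the form $f(x,y) = g(x/y^u)$ with $\supp(g) = \Delta_s$ a ``cyclic interval'' of $2^{m-1}$ consecutive powers of $\alpha$ in $\mathbb{F}_{2^m}^*$. First I would compute $W_f(a,b)$ directly. Splitting the sum over $y = 0$ and $y \neq 0$ and substituting $x = z y^u$ for $y \neq 0$, one gets
\[
W_f(a,b) = 1 + \sum_{y \in \mathbb{F}_{2^m}^*} \sum_{z \in \mathbb{F}_{2^m}} (-1)^{g(z) + \tr_1^m(a z y^u) + \tr_1^m(b y)},
\]
and the $y=0$ contribution is $\sum_x (-1)^{0 + 0} = 2^m$ when... — more carefully, $f(x,0) = g(x/0)$ which by the convention equals $g$ evaluated at the symbol $x/0 \neq 0$; I would track this boundary term but it contributes only $O(2^m)$, negligible against the required magnitude $2^{n/2} = 2^m$ only up to lower-order corrections, so it must be handled honestly. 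The structure $f(x,y) = g(xy^{-u})$ is exactly that of the classical $\mathcal{PS}_{\mathrm{ap}}$-type construction: such a function is bent iff $g$ is balanced on $\mathbb{F}_{2^m}^*$ together with the value $g$ assigns is irrelevant — the standard fact is that $(x,y) \mapsto g(xy^{-1})$ (with $g$ defined appropriately at $0$) is bent iff $g$ restricted to $\mathbb{F}_{2^m}^*$ is balanced, i.e. takes value $1$ exactly $2^{m-1}$ times. Since $|\supp(g)| = |\Delta_s| = 2^{m-1}$ by construction, $g$ is indeed balanced on $\mathbb{F}_{2^m}^*$, so replacing $x$ by $xy^u$ (a bijection of $\mathbb{F}_{2^m}$ for each fixed $y \neq 0$, as $(u, 2^m-1) = 1$) shows $f$ is of $\mathcal{PS}_{\mathrm{ap}}$ type and hence bent — \emph{but only when the substitution $y \mapsto y^u$ interacts correctly with the linear term $\tr_1^m(by)$}. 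This is where the condition $u = 2^t$ enters: if $u$ is a power of $2$, then $y \mapsto y^u$ is a field automorphism, so $\tr_1^m(by) = \tr_1^m(b^{1/u} y^u)$ is again a linear form in the new variable $y^u$, and the whole computation collapses to the known bent case. If $u$ is not a power of $2$, $y \mapsto y^u$ is still a permutation but not additive, the linear character $\tr_1^m(by)$ does not transform into a linear character of $y^u$, and the Walsh coefficients cease to be flat; one then exhibits some $(a,b)$ with $|W_f(a,b)| \neq 2^m$.

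The main obstacle is the last point: proving that when $r = 1$ but $u \neq 2^t$, the function is genuinely \emph{not} bent. The clean way is via algebraic degree: by the Remark following Theorem \ref{nbdeg}, when $u = -2^t$ we have $\deg f = n - 2 = 2m - 2 > m = n/2$ (using $m \geq 3$), contradicting the degree bound for bent functions, so such $f$ is not bent. For general $u$ that is neither $2^t$ nor $-2^t$, the degree argument is not immediately conclusive from what is stated, so I would instead argue on the Walsh side: compute $W_f(0, b)$ for a suitable $b$ and show it is not $\pm 2^m$, or equivalently show $f$ is not balanced-in-the-bent-sense by counting, exploiting that the map $\gamma \mapsto \gamma y^u$ does not commute with the additive structure needed for all Walsh coefficients to be flat. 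Concretely, I expect the cleanest route is: bentness of $(x,y)\mapsto g(xy^{-u})$ with $g$ balanced on $\mathbb{F}_{2^m}^*$ holds iff for every $b \neq 0$ the function $y \mapsto \tr_1^m(by) + (\text{the } \mathcal{PS} \text{ phase from } y^u)$ distributes correctly, and a short character-sum computation shows this fails unless $y\mapsto y^u$ is additive, i.e. $u \equiv 2^t$. I would present the ``if'' direction in full (it is the $\mathcal{PS}_{\mathrm{ap}}$ computation with the automorphism substitution) and the ``only if'' direction by combining the degree obstruction for $r > 1$ with the Walsh-flatness obstruction for $u \not\equiv 2^t \pmod{\text{Frobenius}}$.
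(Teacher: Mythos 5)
Your reduction to $r=1$ via the degree bound $n-m\leq\deg f$ versus the bent bound $\deg f\leq n/2$ is exactly the paper's argument, and your ``if'' direction ($r=1$, $u=2^t$ gives a $\mathcal{P}\mathcal{S}_{\text{ap}}$-type bent function) also matches. The genuine gap is in the ``only if'' direction for $u$: you dispose of $u=-2^t$ by the degree remark, but for $u$ that is neither $2^t$ nor $-2^t$ you fall back on a Walsh-spectrum argument that you never carry out. The key assertion there --- that $W_f$ cannot be flat unless $y\mapsto y^u$ is additive --- is stated as an expectation (``a short character-sum computation shows\dots''), not proved, and it is not a routine computation: the non-additive substitution does not obviously destroy flatness, so as written this direction is unproved for general $u$.

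Moreover, your claim that ``the degree argument is not immediately conclusive from what is stated'' is incorrect, and this is precisely where the paper closes the case. From the explicit bivariate representation in Theorem \ref{nbdeg} with $r=1$, the monomial $x^iy^{2^m-1-\overline{ui}}$ has nonzero coefficient for every $1\leq i\leq 2^m-2$ (since $(1+\alpha^{-i})^{2^{m-1}-1}\neq0$), and its degree is $\wt_m(i)+m-\wt_m(\overline{ui})$. Bentness forces $\deg f\leq m$, hence $\wt_m(i)\leq\wt_m(\overline{ui})$ for all such $i$; taking $i=2^m-2$ gives $m-1\leq\wt_m(\overline{-u})=m-\wt_m(u)$, so $\wt_m(u)=1$, i.e.\ $u$ is a power of $2$ modulo $2^m-1$. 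This one-line computation replaces your entire proposed Walsh analysis. To repair your proof, either adopt this degree argument or actually exhibit, for each $u$ with $\wt_m(u)\geq2$, a pair $(a,b)$ with $|W_f(a,b)|\neq 2^m$ --- the latter being substantially harder than your sketch suggests.
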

\begin{proof}
Since the algebraic degree of an $n$-variable bent function is at
most $n/2$ and $n/2\leq n-m\leq\deg f\leq n-2$ from Theorem
\ref{nbdeg}, we know that only when $r=1$, i.e. $n-m=n/2$, $f$ is
possibly  bent. Furthermore, when $u=2^t$ for some non-negative
integer $t$, it is clear that $f$ is bent (in fact, $f$ is
equivalent to a function belonging to the well-known
$\mathcal{P}\mathcal{S}_{\text{ap}}$ class of bent functions). To
prove this condition is also necessary, we should prove that $\deg
f=n/2=m$ implies $\wt_m(u)=1$. In fact, for any $1\leq i\leq 2^m-2$,
$\wt_{m}(i)+\wt_m(2^m-1-\overline{ui})=m+\wt_{m}(i)-\wt_m(\overline{ui})\leq
\deg f=m$, thus we have $\wt_m(i)\leq\wt_m(\overline{ui})$. Fixing
$i$ to be $2^m-2$, we get $m-1\leq\wt_m(\overline{-u})=m-\wt_m(u)$,
which implies that $\wt_m(u)\leq1$. Therefore, $\wt_m(u)=1$.
\end{proof}

\subsection{Algebraic immunity}

\begin{thm}\label{ainonban}
Let $f$ be the Boolean function defined in Construction
\ref{const1}. Then $\ai(f)\leq m$. In particular, $f$ has optimal
algebraic immunity provided that Conjecture \ref{GTD} is true when
$r=1$.
\end{thm}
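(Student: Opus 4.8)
The plan is to prove the two assertions separately. For the upper bound $\ai(f)\le m$, I would exhibit an explicit nonzero annihilator of $f$ of degree at most $m$. Recall from Remark~\ref{rkbirep} that $\Supp(f)=\{(\gamma y^u,y)\mid y\in\mathbb{F}_{2^m}^*,\ \gamma\in\Delta_s\}$, so in particular the second coordinate of every point of $\Supp(f)$ lies in $\mathbb{F}_{2^m}^*$. Consider the Boolean function $g(x,y)=1+(y^{2^m-1}-1)=y^{2^m-1}+\text{(constant)}$ — more precisely the indicator of the set $\{(x,y)\mid y=0\}$, whose bivariate representation is $1+y^{2^m-1}$. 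Then $fg=0$ because $f$ vanishes whenever $y=0$, and $g$ is a nonzero function. Its algebraic degree is $\wt_m(2^m-1)=m$ (using Proposition~\ref{birepdeg} with $i=0$, $j=2^m-1$). Hence $\ai(f)\le\deg g=m$. One should double-check that $g$ is genuinely nonzero and that $fg=0$ as Boolean functions, i.e.\ pointwise over $\mathbb{F}_{2^{rm}}\times\mathbb{F}_{2^m}$, which is immediate from the description of $\Supp(f)$.

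For the second, harder assertion — that when $r=1$ the truth of Conjecture~\ref{GTD} forces $\ai(f)$ to be optimal, i.e.\ $\ai(f)=n/2=m$ — I would follow the bivariate-representation strategy used in \cite{tu,tang,jin}. Suppose $h$ is a nonzero annihilator of $f$ (the case of $f+1$ being symmetric) with $\deg h\le n/2-1=m-1$. Write $h$ in its bivariate form $h(x,y)=\sum_{i,j}h_{i,j}x^iy^j$ over $\mathbb{F}_{2^m}\times\mathbb{F}_{2^m}$ (here $rm=m$). The condition $fh=0$ means $h$ vanishes on $\Supp(f)=\{(\gamma y^u,y)\mid y\in\mathbb{F}_{2^m}^*,\ \gamma\in\Delta_s\}$. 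Substituting $x=\gamma y^u$ and collecting, one gets for each $\gamma\in\Delta_s$ a univariate polynomial identity in $y\in\mathbb{F}_{2^m}^*$, and the number of monomials $y^{ui+j}$ that can appear is controlled by $\deg h$. The key combinatorial input is that the set of exponent pairs $(i,j)$ with $h_{i,j}\ne 0$ satisfies $\wt_m(i)+\wt_m(j)\le m-1$, and after reducing $ui+j$ modulo $2^m-1$ one partitions these pairs according to the residue $t=\overline{ui+j}$; the set of pairs landing in residue $t$ is exactly (a reindexing of) the set $S_t$ of Conjecture~\ref{GTD}. Since a nonzero Boolean function $g$ on $\Delta_s$ (the support of the ``inner'' function) exists, one derives that for some residue $t$ the system forces more than $2^{m-1}$ independent constraints coming from $|\Delta_s|=2^{m-1}$ points — contradicting $|S_t|\le 2^{m-1}$. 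Thus no such $h$ exists and $\ai(f)\ge m$; combined with the first part, $\ai(f)=m$.

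The main obstacle, and the step I would spend the most care on, is setting up the exact dictionary between the coefficients $h_{i,j}$ of an annihilator and the sets $S_t$ of the conjecture — in particular, tracking how the shift parameter $s$ (i.e.\ the choice of $\Delta_s$) interacts with the reduction $\overline{ui+j}\bmod(2^m-1)$, and verifying that the bound $|S_t|\le 2^{m-1}$ is precisely what is needed (rather than, say, $|S_t|<2^{m-1}$ or a bound on a union of several $S_t$'s). This is exactly the point where the proofs in \cite{tu,tang,jin} do their real work, and one must check that the present construction — with $y$ restricted to the subfield, which in the $r=1$ case is the whole field $\mathbb{F}_{2^m}$ — reduces cleanly to that setting. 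I expect that for $r=1$ the argument is essentially a transcription of \cite[proof of optimal AI]{jin} once Remark~\ref{rkbirep} is invoked to identify $f$ with the unbalanced generalized Tu-Deng function, so the cleanest write-up may simply cite \cite{jin} for this direction after noting the identification, reserving a detailed bivariate argument for the balanced modification treated later in the paper.
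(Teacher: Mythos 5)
Your proposal matches the paper's own proof: the upper bound is obtained from the same explicit annihilator $1+y^{2^m-1}$ of degree $m$, and for the $r=1$ case the paper does exactly what you suggest at the end, namely identify $f$ with the unbalanced generalized Tu--Deng function of \cite{jin} via Remark~\ref{rkbirep} and cite \cite[Theorem~4.2]{jin} rather than redo the bivariate/BCH argument. The only cosmetic issue is your initial expression $1+(y^{2^m-1}-1)$, which you immediately and correctly replace by the indicator $1+y^{2^m-1}$ of $\{y=0\}$.
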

\begin{IEEEproof}
Obviously, $1+y^{2^m-1}$ is an annihilator of $f$, whose degree is
$m$. This implies that $\ai(f)\leq m$.

When $r=1$, $f$ coincides with the function defined in
\cite[Construction~4.1]{jin} and Conjecture \ref{GTD} coincides with
\cite[Conjecture~3.3]{jin} according to Remark \ref{rkbirep} and
Remark \ref{rkgtd} respectively, so from \cite[Theorem~4.2]{jin} we
are clear that $\ai(f)=m=n/2$ if Conjecture \ref{GTD} is true.
\end{IEEEproof}

From Theorem \ref{ainonban} we can see that the algebraic immunity
of the functions from Construction \ref{const1} is not possible to
be optimal when $r>1$. However, it is interesting that they can be
modified to be functions with optimal algebraic immunity when
modified to be balanced functions. So in this case, our process to
obtain balanced functions with optimal algebraic immunity is
different from those in \cite{tu,tang,jin}, where balanced functions
with optimal algebraic immunity were all modified from unbalanced
ones with optimal algebraic immunity.

\section{A class of balanced functions with good cryptographic properties}

\begin{const}\label{const2}
Let $0\leq s,~l\leq 2^{rm}-2$ be two integers. Define an
$n$-variable Boolean function
$F:\mathbb{F}_{2^{rm}}\times\mathbb{F}_{2^{m}}\rightarrow\mathbb{F}_{2}$
 by setting
\begin{eqnarray*}
   \supp(F)&=& \{(\gamma y^u,y)\mid y\in\mathbb{F}_{2^{m}}^*,~\gamma\in\Delta_s\} \\
   &&\cup\{(\gamma,0)\mid \gamma\in\Delta_l\}.
\end{eqnarray*}
\end{const}

\begin{rmk}\label{rkbirepban}
It is easy to see that the bivariate representation of $F$ over
$\mathbb{F}_{2^{rm}}\times\mathbb{F}_{2^{m}}$ can be written as
\[F(x,y)=\left\{\begin{array}{ll}
g\left(\frac{x}{y^u}\right)&\text{if}~xy\neq0\\
\omega(x)&\text{if}~y=0,
\end{array}\right.\]
where $g$ and $\omega$ are  $(rm)$-variable functions with
$\supp(g)=\Delta_s$ and  $\supp(\omega)=\Delta_l$.
\end{rmk}

\begin{rmk}
It is easy to see that if $u_1$ and $u_2$ are chosen from the same
cyclotomic coset modulo $(2^m-1)$, then the functions defined from
$u_1$ and $u_2$ in Construction  \ref{const2} are linearly
equivalent.
\end{rmk}

Note that Construction \ref{const2} provides various ways to obtain
$n$-variable Boolean functions for an even integer $n$ since the
parameters, namely $m$,  $u$, $s$ and $l$, can be flexibly chosen.
In the following, we discuss some cryptographic properties of the
function $F$.

\subsection{Balancedness, bivariate representation and algebraic degree}
\begin{thm}
Let $F$ be the Boolean function defined in Construction
\ref{const2}. Then $F$ is balanced.
\end{thm}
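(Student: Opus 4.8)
The plan is to count $\wt(F)$ directly from the description of $\supp(F)$ given in Construction \ref{const2}, splitting the support into its two disjoint pieces according to whether $y\neq0$ or $y=0$. The first piece is $\{(\gamma y^u,y)\mid y\in\mathbb{F}_{2^m}^*,\ \gamma\in\Delta_s\}$; the second is $\{(\gamma,0)\mid \gamma\in\Delta_l\}$. These two sets are disjoint because every element of the first has nonzero second coordinate while every element of the second has second coordinate $0$, so $\wt(F)$ is the sum of their cardinalities.

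First I would count the second piece: since $\Delta_l=\{\alpha^i\mid l\leq i\leq l+2^{rm-1}-1\}$ consists of $2^{rm-1}$ \emph{distinct} powers of the primitive element $\alpha$ (the exponents range over $2^{rm-1}\le 2^{rm}-1$ consecutive integers, so no two coincide modulo $2^{rm}-1$), we get exactly $2^{rm-1}$ pairs $(\gamma,0)$. Next, for the first piece, I would fix $y\in\mathbb{F}_{2^m}^*$ and observe that as $\gamma$ runs over the $2^{rm-1}$ distinct elements of $\Delta_s$, the products $\gamma y^u$ are also $2^{rm-1}$ distinct elements of $\mathbb{F}_{2^{rm}}^*$ (multiplication by the fixed nonzero constant $y^u$ is a bijection on $\mathbb{F}_{2^{rm}}^*$); moreover pairs coming from different values of $y$ are automatically distinct because they already differ in the second coordinate. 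Hence the first piece contributes $(2^m-1)\cdot 2^{rm-1}$ elements. Adding the two contributions gives
\[
\wt(F)=(2^m-1)\,2^{rm-1}+2^{rm-1}=2^m\cdot 2^{rm-1}=2^{(r+1)m-1}=2^{n-1},
\]
which is precisely the condition for $F$ to be balanced.

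There is essentially no obstacle here: the only thing to be careful about is the disjointness and distinctness bookkeeping, i.e.\ that the $2^{rm-1}$ exponents defining $\Delta_s$ (resp.\ $\Delta_l$) really do give $2^{rm-1}$ distinct field elements, and that multiplication by $y^u$ preserves cardinality. Once these routine observations are in place, the weight count is immediate. One could alternatively phrase the same argument via the bivariate representation in Remark \ref{rkbirepban} together with the Walsh-transform characterization $W_F(\vec 0)=0$, but the direct cardinality computation above is cleaner and self-contained.
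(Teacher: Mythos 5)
Your proposal is correct and follows essentially the same route as the paper, which simply computes $|\supp(F)|=(2^m-1)2^{rm-1}+2^{rm-1}=2^{n-1}$; you merely spell out the disjointness and distinctness bookkeeping that the paper treats as obvious.
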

\begin{IEEEproof}
It is obvious that $|\supp(f)|=(2^m-1)2^{rm-1}+2^{rm-1}=2^{n-1}$, so
$F$ is balanced.
\end{IEEEproof}

\begin{thm}
Let $F$ be the Boolean function defined in Construction
\ref{const2}. Then the bivariate representation of $F$ over
$\mathbb{F}_{2^{rm}}\times\mathbb{F}_{2^{m}}$ is
\begin{eqnarray*}
F(x,y)&=&\sum_{i=1\atop (2^m-1)\nmid
i}^{2^{rm}-2}\alpha^{-is}(1+\alpha^{-i})^{2^{rm-1}-1} x^iy^{2^m-1-\overline{ui}}\\
&&+\sum_{j=1}^{\frac{2^{rm}-1}{2^m-1}-1}
\alpha^{-(2^m-1)js}(1+\alpha^{-(2^m-1)j})^{2^{rm-1}-1}\\
&&\qquad\qquad\qquad\qquad\qquad\qquad\times\,
x^{(2^m-1)j}y^{2^m-1}\\
&&+\sum_{i=1}^{2^{rm}-2}\alpha^{-il}(1+\alpha^{-i})^{2^{rm-1}-1}
x^i(1+y^{2^m-1}).
\end{eqnarray*}
Therefore, $\deg F=n-1$, i.e. $F$ has optimal algebraic degree.
\end{thm}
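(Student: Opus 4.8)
The plan is to derive the bivariate representation of $F$ directly from its definition via Lagrange interpolation, reusing the computation already done for $f$ in Theorem \ref{nbdeg}. Since $\supp(F)=\supp(f)\cup\{(\gamma,0)\mid\gamma\in\Delta_l\}$ and these two sets are disjoint (the first has $y\neq0$, the second has $y=0$), we have $F(x,y)=f(x,y)+h(x,y)$ where $h$ is the indicator of $\{(\gamma,0)\mid\gamma\in\Delta_l\}$. The first two sums in the claimed formula are exactly the bivariate representation of $f$ from Theorem \ref{nbdeg}, so it remains only to compute the bivariate representation of $h$ and check it equals $\sum_{i=1}^{2^{rm}-2}\alpha^{-il}(1+\alpha^{-i})^{2^{rm-1}-1}x^i(1+y^{2^m-1})$.

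For the term $h$, I would use the interpolation formula from the proof of Lemma \ref{degn-1}: $h(x,y)=\sum_{\gamma\in\Delta_l}[1+(x+\gamma)^{2^{rm}-1}][1+(y+0)^{2^m-1}]$, since $c_2=0$ for every point in $\supp(h)$. The factor $[1+y^{2^m-1}]$ pulls out, leaving $h(x,y)=(1+y^{2^m-1})\sum_{\gamma\in\Delta_l}[1+(x+\gamma)^{2^{rm}-1}]$. The sum $\sum_{\gamma\in\Delta_l}[1+(x+\gamma)^{2^{rm}-1}]$ is precisely the univariate representation of the $(rm)$-variable function $\omega$ with $\supp(\omega)=\Delta_l$, which by the same computation as in the proof of Theorem \ref{nbdeg} (i.e. \cite[Theorem~2]{tang} applied with parameter $l$ instead of $s$) equals $\sum_{i=1}^{2^{rm}-2}\alpha^{-il}(1+\alpha^{-i})^{2^{rm-1}-1}x^i$. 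Multiplying back by $(1+y^{2^m-1})$ gives exactly the third sum, completing the verification of the formula.

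For the degree claim, I would argue $\deg F=\max\{\deg f,\deg h\}$ provided the two do not cancel at the top degree; more carefully, I would bound $\deg F$ from above and exhibit a term of degree $n-1$ that survives. From the third sum, the term with $i=2^{rm}-2$ contributes $x^{2^{rm}-2}y^{2^m-1}$, of degree $\wt_{rm}(2^{rm}-2)+\wt_m(2^m-1)=(rm-1)+m=n-1$; its coefficient is $\alpha^{-(2^{rm}-2)l}(1+\alpha^{-(2^{rm}-2)})^{2^{rm-1}-1}$, which is nonzero since $\alpha^{-(2^{rm}-2)}=\alpha\neq1$. I must check this monomial $x^{2^{rm}-2}y^{2^m-1}$ does not appear in the first two sums of $F$: in the first sum the $y$-exponent is $2^m-1-\overline{ui}$ with $(2^m-1)\nmid i$, so $\overline{ui}\neq0$ and the exponent is strictly less than $2^m-1$; in the second sum the $x$-exponent is a multiple of $2^m-1$, hence $\neq 2^{rm}-2$ (as $2^m-1\nmid 2^{rm}-2$, since $2^m-1\mid 2^{rm}-1$). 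So no cancellation occurs and $\deg F\geq n-1$. For the upper bound $\deg F\leq n-1$, I would invoke Lemma \ref{degn-1} style reasoning, or more simply the observation that the coefficient of $x^{2^{rm}-1}y^{2^m-1}$ (the only degree-$n$ monomial) vanishes because $|\supp(F)|=2^{n-1}$ is even, so no monomial of degree $n$ appears in any of the three sums. Hence $\deg F=n-1$.

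The only mildly delicate point is confirming the no-cancellation claim between the three sums at the monomial $x^{2^{rm}-2}y^{2^m-1}$; everything else is a direct transcription of the earlier computation with the added bookkeeping that $\supp(f)$ and $\supp(h)$ are disjoint, so the bivariate representations simply add. I do not anticipate a real obstacle here, since the structure of the exponents ($y$-exponent $<2^m-1$ in the first sum, $x$-exponent divisible by $2^m-1$ in the second) cleanly separates the three contributions at the top.
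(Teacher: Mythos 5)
Your proposal is correct and follows essentially the same route as the paper: both decompose $F(x,y)=f(x,y)+\omega(x)(1+y^{2^m-1})$ with $\supp(\omega)=\Delta_l$, reuse Theorem \ref{nbdeg} together with the univariate formula from \cite[Theorem~2]{tang} for the representation, and then observe that the degree-$(n-1)$ contribution comes only from the $\omega(x)(1+y^{2^m-1})$ part. The only cosmetic difference is in certifying $\deg F=n-1$: the paper quotes $\deg\omega=rm-1$ for the Carlet--Feng function $\omega$ and combines it with $\deg f\leq n-2$, whereas you exhibit the surviving monomial $x^{2^{rm}-2}y^{2^m-1}$ directly and check non-cancellation, which is equally valid (via Proposition \ref{birepdeg}).
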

\begin{IEEEproof}
It is easy to see from Remark \ref{rkbirepban} that the bivariate
representation of $F$ over
$\mathbb{F}_{2^{rm}}\times\mathbb{F}_{2^m}$ can be written as
\[F(x,y)=f(x,y)+\omega(x)(1+y^{2^m-1}),\]
where $f(x,y)$ is the function defined in Construction \ref{const1}.
 Then the representation of $F$ follows from Theorem \ref{nbdeg} and
 \cite[Theorem~2]{tang}.

Since $\omega(x)$ is in fact an $(rm)$-variable Carlet-Feng
function, we are clear that $\deg\omega=rm-1$ according to
\cite[Theorem~2]{CF08}, so the degree of $\omega(x)(1+y^{2^m-1})$ is
$rm-1+m=n-1$. However, by Theorem \ref{nbdeg} we have $\deg f\leq
n-2$. Finally we know that $\deg F=n-1$, which is optimal for a
balanced function.
\end{IEEEproof}

\subsection{Algebraic immunity}

In this subsection, we study the algebraic immunity of the functions
from Construction \ref{const2}. For the basic notions about BCH
codes and related results that will be used in the proof, we refer
to \cite{macwill}. Besides, the following lemma is also necessary.

\begin{lem}\label{veccon}
Let $\vec U,~\vec V\in\mathbb{F}_2^t$ be two binary vectors. Then
 $\wt(\vec U)+\wt(\vec V)\geq\wt(\vec U+\vec V)$.
\end{lem}
\begin{IEEEproof}
It is easy to see that  $\wt(\vec U+\vec V)=\wt(\vec U)+\wt(\vec
V)-\wt(\vec U\times\vec V)$, where $\vec U\times \vec V$ represents
the Hadamard product (i.e. bitwise multiplication) of $\vec U$ and
$\vec V$.
\end{IEEEproof}

\begin{thm}\label{aibalan}
Let $F$ be the Boolean function defined in Construction
\ref{const2}. Then $F$ has optimal algebraic immunity provided that
Conjecture \ref{GTD} is true.
\end{thm}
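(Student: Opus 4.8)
The plan is to show that $F$ has no nonzero annihilator of degree $\leq n/2 - 1$, and that neither does $F+1$; by the $\lceil n/2\rceil$ upper bound this gives optimality. I would work entirely in the bivariate representation over $\mathbb{F}_{2^{rm}}\times\mathbb{F}_{2^m}$, writing a candidate annihilator $G$ as $G(x,y)=\sum_{i,j}g_{i,j}x^iy^j$ with $0\leq i\leq 2^{rm}-2$, $0\leq j\leq 2^m-1$ (the ranges one gets from Lagrange interpolation), and with the degree constraint $\wt_{rm}(i)+\wt_m(j)\leq n/2-1$ imposed via Proposition \ref{birepdeg}. The condition $FG=0$ means $G$ vanishes on $\supp(F)$, i.e. on all pairs $(\gamma y^u,y)$ with $y\in\mathbb{F}_{2^m}^*$, $\gamma\in\Delta_s$, and on all $(\gamma,0)$ with $\gamma\in\Delta_l$. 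The first step is to split $G$ according to the two pieces of $\supp(F)$: restricting to $y\neq 0$ and substituting $x=\gamma y^u$ turns the vanishing condition into a univariate identity in $\gamma$ and $y$, and restricting to $y=0$ forces $\sum_i g_{i,0}\gamma^i=0$ for all $\gamma\in\Delta_l$, i.e. the "$y=0$ part" $\sum_i g_{i,0}x^i$ is an annihilator (over $\mathbb{F}_{2^{rm}}$) of the $(rm)$-variable Carlet--Feng function $\omega$ with support $\Delta_l$.

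Next I would exploit the structure of $\Delta_s$ and $\Delta_l$ as sets of $2^{rm-1}$ consecutive powers of the primitive element $\alpha$. This is exactly the device used in the Carlet--Feng proof and in \cite{tu,tang,jin}: vanishing of a univariate polynomial on such an interval says that the corresponding "coefficient word," read off along the cyclic group of order $2^{rm}-1$, lies in a BCH code of designed distance $2^{rm-1}+1$, hence its nonzero part has Hamming weight $\geq 2^{rm-1}+1$ unless it is zero. For the $y=0$ part this immediately gives, via the standard Carlet--Feng argument, that $\sum_i g_{i,0}x^i$ must actually be the zero annihilator in the relevant degree range, i.e. all $g_{i,0}=0$ for the indices allowed by the degree bound. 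For the $y\neq 0$ part, after collecting terms by the exponent of $y$ modulo $2^m-1$ (using $y^{2^m-1}=1$ on $\mathbb{F}_{2^m}^*$), one gets for each residue $t$ a univariate polynomial in $\gamma=x/y^u$ whose support is contained among indices $i$ with $ui\equiv -t$ or the "collapsed" indices $i$ divisible by $2^m-1$; vanishing on $\Delta_s$ again forces the weight of that coefficient word to be either $0$ or $\geq 2^{rm-1}+1$. Counting: the number of admissible monomials $x^iy^j$ contributing to residue class $t$ with $\wt_{rm}(i)+\wt_m(j)\leq n/2-1$ is precisely governed by the set $S_t$ of Conjecture \ref{GTD} (this is the place where one identifies $\wt_m(j)$ with $\wt_m(2^m-1-\overline{ui})=m-\wt_m(\overline{ui})$ and translates the degree inequality into $\wt_{rm}(i)+\wt_m(b)\leq n/2-1$). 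By Conjecture \ref{GTD}, $|S_t|\leq 2^{rm-1}<2^{rm-1}+1$, so the coefficient word cannot have the weight a nonzero BCH codeword would need, forcing it to be zero. Hence every $g_{i,j}=0$ and $F$ has no annihilator of degree $<n/2$.

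For $F+1$ I would run the symmetric argument on the complement of $\supp(F)$. Here the support of $1+F$ restricted to $y\neq 0$ is $\{(\gamma y^u,y): \gamma\in\mathbb{F}_{2^{rm}}\setminus\Delta_s\}$ together with the whole line $y=0$ minus $\Delta_l$; since the complement of $\Delta_s$ in the cyclic group again is (essentially) an interval of $2^{rm}-1-2^{rm-1}=2^{rm-1}-1$ consecutive powers together with $0$, I would use Lemma \ref{veccon} to handle the extra point $x=0$ (the coefficient word picks up a controlled perturbation of weight $1$, which changes the BCH weight bound by at most $1$), so that vanishing forces weight $0$ or $\geq 2^{rm-1}$; combined with $|S_t|\leq 2^{rm-1}$ one has to be slightly careful and use that the relevant index set for $1+F$ is a strict subset of $S_t$ (dropping the collapsed indices $i$ with $(2^m-1)\mid i$, which belong to $F$ not $1+F$), giving a strict inequality again. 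The main obstacle, and the step that needs the most care, is precisely this bookkeeping on the $F+1$ side: correctly tracking which monomials survive in each residue class $t$, how the "missing" point $x=0$ and the collapsed diagonal terms shift the BCH designed distance by $\pm1$, and verifying that after these shifts the combinatorial bound from Conjecture \ref{GTD} is still strong enough (by one unit) to rule out a nonzero codeword. The $y=0$ part for $F+1$ is unproblematic since $1+\omega$ is again a Carlet--Feng function and its optimal immunity is known from \cite{CF08,liumc}. Assembling the two halves and the $y=0$ contribution, all coefficients of $G$ vanish, so $\ai(F)=n/2$, which is optimal.
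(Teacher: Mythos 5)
Your overall machinery is the right one and matches the paper's: bivariate representation, substitution $x=\gamma y^u$, grouping by the exponent of $y$ modulo $2^m-1$, the BCH bound on intervals of consecutive powers of $\alpha$, and Conjecture \ref{GTD} to cap the weight of each coefficient word. But there is a genuine gap in how you dispose of the $y=0$ slice, and it propagates. You claim that $\sum_i g_{i,0}x^i$, being an annihilator of the Carlet--Feng function $\omega$, is ``immediately'' zero by the standard Carlet--Feng argument. That argument only kills annihilators whose coefficient vector has weight at most $2^{rm-1}$, i.e.\ whose exponents satisfy $\wt_{rm}(i)\leq\lfloor(rm-1)/2\rfloor$. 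At the point where you invoke it, the only constraint on the $j=0$ coefficients is $\wt_{rm}(i)\leq n/2-1=(r+1)m/2-1$, and for $r>1$ this exceeds $\lfloor(rm-1)/2\rfloor$, so $\sum_{k\leq n/2-1}\binom{rm}{k}>2^{rm-1}$ (e.g.\ $382>256$ for $r=3$, $m=3$) and the BCH counting argument says nothing. Without that step, your $y\neq0$ analysis does not give ``every $g_{i,j}=0$'': in each residue class $t$ the positions $i$ with $ui\equiv t\pmod{2^m-1}$ receive the \emph{sum} $g_{i,0}+g_{i,2^m-1}$ (the $j=0$ and $j=2^m-1$ monomials collapse onto the same power of $y$), so the vanishing of the coefficient words only yields $g_{i,0}=g_{i,2^m-1}$ there and $g_{i,j}=0$ elsewhere, leaving $h=(1+y^{2^m-1})\sum_i g_{i,0}x^i$ possibly nonzero. (Your parenthetical identifying the ``collapsed'' indices as the multiples of $2^m-1$ is also off: the doubled-up positions are the $i\equiv\tilde u t\pmod{2^m-1}$.)

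The fix is to reverse your order, which is what the paper does: run the $y\neq0$ analysis first, conclude that any annihilator has the form $(1+y^{2^m-1})\sum_i h_{i,0}x^i$, and only then use vanishing on $\Delta_l$. The factor $1+y^{2^m-1}$ costs degree $m$, so at that stage $\wt_{rm}(i)\leq n/2-m-1=(r-1)m/2-1<\lfloor(rm-1)/2\rfloor$, hence $\sum_{k\leq n/2-m-1}\binom{rm}{k}<2^{rm-1}$ and the BCH bound on $\Delta_l$ does force the remaining word to vanish. The same ordering problem affects your $F+1$ half (where you again defer to Carlet--Feng for the $y=0$ slice), and there the strict inequality you need does not come from ``dropping the collapsed indices'': it comes from the fact that $(0,y)\in\supp(F+1)$ forces $h_{0,j}=0$ for all $j$, while the pair $(0,\overline{t})$ always lies in $S_t$, so $\wt(\vec h_t)\leq|S_t|-1\leq 2^{rm-1}-1$ against a designed distance of $2^{rm-1}$.
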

\begin{IEEEproof}
Since when $r=1$ the proof is almost the same with the proof of
\cite[Theorem~5.3]{jin}, we  need only  to deal with the case $r>1$.
We proceed by proving both $F$ and $F+1$ have no nonzero
annihilators of degree less than $n/2$ if Conjecture \ref{GTD} is
true.

Assume $h$ is an $n$-variable Boolean function with $\deg h<n/2$ and
$hF=0$. Write $h$ into its bivariate representation over
$\mathbb{F}_{2^{rm}}\times\mathbb{F}_{2^{m}}$ as
\[h(x,y)=\sum_{i=0}^{2^{rm}-1}\sum_{j=0}^{2^m-1}h_{i,j}x^iy^j.\]
From $\deg h<n/2< rm$ we know that $h_{i,j}=0$ for any $i,~j$ with
$\wt_{rm}(i)+\wt_m(j)\geq n/2$, which implies $h_{2^{rm}-1,j}=0$ for
any $0\leq j \leq 2^m-1$. Thus we can write $h$ into the form
\[h(x,y)=\sum_{i=0}^{2^{rm}-2}\sum_{j=0}^{2^m-2}h_{i,j}x^iy^j+
\sum_{i=0}^{2^{rm}-2}h_{i,2^m-1}x^iy^{2^m-1}.\] From
$h|_{\text{supp}(F)}=0$ we get that, for any
$y\in\mathbb{F}_{2^m}^*$, $\gamma\in\Delta_s$,
\begin{eqnarray*}
  h(\gamma y^u,y) &=&\sum_{i=0}^{2^{rm}-2}\sum_{j=0}^{2^m-2}h_{i,j}\gamma^iy^{ui+j}+\sum_{i=0}^{2^{rm}-2}h_{i,2^m-1}\gamma^i y^{ui}\\
   &=&\sum_{k=0}^{2^m-2}y^k\Bigg[\sum_{i=0}^{2^{rm}-2}h_{i,k-ui
   \,(\text{mod}\;2^m-1)}\gamma^i\\
   &&+\sum_{j=0}^{\frac{2^{rm}-1}{2^m-1}-1}
   h_{\tilde{u}k+j(2^m-1),2^m-1}\gamma^{\tilde{u}k+j(2^m-1)}\Bigg]\\
   &=&\sum_{k=0}^{2^m-2}h_k(\gamma)y^k\\&=&0,
\end{eqnarray*}
where $\tilde{u}$ is the integer satisfying $u\tilde{u}\equiv
1\md2^m-1)$ and $0\leq \tilde{u}k\leq 2^m-2$ is considered modulo
$(2^m-1)$, and
\begin{eqnarray*}
  h_k(\gamma) &=&\sum_{i=0}^{2^{rm}-2}h_{i,k-ui
   \,(\text{mod}\;2^m-1)}\gamma^i\\
   &&+\sum_{j=0}^{\frac{2^{rm}-1}{2^m-1}-1}
   h_{\tilde{u}k+j(2^m-1),2^m-1}\gamma^{\tilde{u}k+j(2^m-1)}.
\end{eqnarray*}
Therefore, for any $0\leq k\leq 2^m-2$, $h_k(\gamma)=0$ for any
$\gamma\in\Delta_s$. Viewing $h_k(\gamma)$ as a polynomial in
$\gamma$, we find that the vector of coefficients can be represented
as
\begin{eqnarray*}
\vec h_k&=&\big(h_{0,k},h_{1,k-u},\ldots,h_{\tilde{u}k,0},\ldots,h_{2^m-2,k+u},\\
  &&h_{2^m-1,k},h_{2^m,k-u},\ldots,h_{2^m-1+\tilde{u}k,0},\ldots,h_{2^{m+1}-3,k+u},\\
&&\ldots,h_{2^{rm}-2^m+\tilde{u}k,0},\ldots,h_{2^{rm}-2,k+u}\big)\\
&&+\big(0,\ldots,0,h_{\tilde{u}k,2^m-1},0,\ldots,0,h_{2^m-1+\tilde{u}k,2^m-1},0,\\
&&\ldots,0,h_{2^{rm}-2^m+\tilde{u}k,2^m-1},0,\ldots,0\big)\\
&:=&\vec h_k^{(1)}+\vec h_k^{(2)}.
\end{eqnarray*}
Now that $\vec h_k$ can be viewed as a codeword of certain BCH code
with designed distance $2^{rm-1}+1$, if it is not zero, the BCH
bound implies that $\wt(\vec h_k)\geq2^{rm-1}+1$. On the other hand,
Lemma \ref{veccon} and  Conjecture \ref{GTD} imply that
\[\wt(\vec h_k)=\wt\left(\vec h_k^{(1)}+\vec h_k^{(2)}\right)\leq
\wt\left(\vec h_k^{(1)}\right)+\wt\left(\vec h_k^{(2)}\right)\leq
2^{rm-1}.\] A contradiction follows and hence we have $\vec h_k=0$
for any $0\leq k\leq 2^m-2$, which leads to the fact that
$h_{i,0}=h_{i,2^m-1}$ for any $0\leq i\leq 2^{rm}-2$ with $i\equiv
\tilde{u}k\md 2^m-1)$, and $h_{i,k-ui}=0$ otherwise. Since  we have
the equality
\begin{eqnarray*}
   &&\bigcup_{0\leq k\leq 2^m-2}\{0\leq i\leq 2^{rm}-2\mid i\equiv
\tilde{u}k\md 2^m-1)\}  \\
   &&\qquad\qquad\quad=\{i\mid 0\leq i\leq 2^{rm}-2\},
\end{eqnarray*}
we are now clear that the annihilator  $h$ is of the form
\begin{eqnarray*}
   h(x,y)&=& \sum_{i=0}^{2^{rm}-2}(h_{i,0}x^i+h_{i,2^m-1}x^iy^{2^m-1}) \\
   &=&(1+y^{2^m-1})\sum_{i=0}^{2^{rm}-2}h_{i,0}x^i.
\end{eqnarray*}
In fact, the sums above are over all $i$'s with $\wt_{rm}(i)<n/2-m$.
Noting that $\{(\gamma,0)\mid \gamma\in\Delta_l\}\subseteq\supp(f)$,
we have $h(\gamma,0)=0$ for any $\gamma\in\Delta_l$, that is
\[\sum_{i=0}^{2^{rm}-2}h_{i,0}\gamma^i=0~\text{for~any}~\gamma\in\Delta_l.\]
Denote  $\vec h'=(h_{0,0},h_{1,0},\ldots,h_{2^{rm}-2,0})$. If $\vec
h'\neq\vec0$, the BCH bound implies that $\wt(\vec
h)\geq2^{rm-1}+1$; on the other hand, the restriction on the degree
of $h$ leads to
$$\wt(\vec
h')\leq\sum_{k=0}^{n/2-m-1}{rm\choose
k}<\sum_{k=0}^{\lfloor\frac{rm-1}{2}\rfloor}{rm\choose
k}\leq2^{rm-1}.$$ This contradiction implies that $\vec h'=\vec0$,
i.e. $h=0$.

As for $f+1$, the proof is almost the same. Assume $h$ is a
  Boolean function with $\deg h<n/2$
and $h(f+1)=0$ represented  as above. In this case, $h(\gamma
y^u,y)=0$ for any
$\gamma\in\mathbb{F}_{2^{rm}}^*\backslash\Delta_s$,
$y\in\mathbb{F}_{2^m}^*$, thus $\vec h_k$ can be viewed as a
codeword of certain BCH code with designed distance $2^{rm-1}$ and
the BCH bound implies that $\wt(\vec h_k)\geq2^{rm-1}$ if $\vec
h_k\neq\vec0$. On the other hand, $h(0,y)=0$ for any
$y\in\mathbb{F}_{2^m}$, which implies that $h_{0,k}=0$ for all
$0\leq k\leq 2^m-1$. Since $\wt_m(k)\leq m-1< n/2-1$ for any $0\leq
k\leq 2^m-2$, Lemma \ref{veccon} together with Conjecture \ref{GTD}
imply that $\wt(\vec h_k)\leq2^{rm-1}-1$, which lead to a
contradiction. Then we get that $h$ is of the form
\[h(x,y)=(1+y^{2^m-1})\sum_{i=0}^{2^{rm}-2}h_{i,0}x^i.\]
Further noting that $h(\gamma,0)=0$ for any
$\gamma\in\mathbb{F}_{2^{rm}}^*\backslash\Delta_l$, we get $\wt(\vec
h')\geq 2^{rm-1}$ by the BCH bound if $\vec h'\neq\vec0$, where
$\vec h'=(h_{0,0},h_{1,0},\ldots,h_{2^{rm}-2,0})$. However, from the
restriction on the degree of $h$, we have $\wt(\vec h')\leq
2^{rm-1}-1$. This contradiction leads to $h=0$.  We complete the
proof.
\end{IEEEproof}

\begin{rmk}
Set $D_{rm}=\sum_{k=0}^{n/2-m-1}{rm\choose k}$ and
$\Theta_t=\{\alpha^t,\ldots,\alpha^{t+D_{rm}-1}\}$ for any integer
$0\leq t\leq 2^{rm}-2$. Assume $l$ and $l'$ satisfy that
$\Theta_l\cap\Theta_{l'}=\emptyset$. Then from the proof of Theorem
\ref{aibalan}, it can be observed that if we set
$\supp(\omega)=\{(\gamma,0)\mid \gamma\in\Theta_l\cup C\}$ where $C$
is any subset of
$\mathbb{F}_{2^{rm}}^*\backslash(\Theta_l\cup\Theta_{l'})$
 with size $2^{rm-1}-D_{rm}$, the function $F$ constructed with this
 $\omega$ will also be balanced and have optimal algebraic immunity provided Conjecture
 \ref{GTD} is true. However, the algebraic degree of functions constructed in this
 manner might not be optimal.
\end{rmk}

\subsection{Nonlinearity}

Applying the classical technique of using Gauss sums to estimate
nonlinearity of Boolean functions constructed based on finite fields
(see, for example, \cite{CF08,tu,jin} and especially \cite{tang,liu}
), we can also obtain a lower bound of the nonlinearity of the
functions from Construction \ref{const2}. For simplicity, we use
"$\Tr$" and "$\tr$" to denote "$\tr^{rm}_1$" and "$\tr^m_1$"
respectively and denote $Q=2^{rm}$, $q=2^m$.

\begin{lem}[\cite{tang}]\label{lemtriang}
For every $0<x<\pi/2$,
\[\frac{1}{x}+\frac{x}{6}<\frac{1}{\sin x}<\frac{1}{x}+\frac{x}{4}.\]
\end{lem}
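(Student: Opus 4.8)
The plan is to clear denominators and reduce the two-sided estimate to a pair of comparisons between $\sin x$ and explicit rational functions, then sandwich $\sin x$ between truncated Taylor polynomials. Since $x\sin x>0$ on $(0,\pi/2)$, multiplying the asserted chain through by $x\sin x$ shows it is equivalent to
\[\frac{4x}{4+x^2}<\sin x<\frac{6x}{6+x^2},\qquad 0<x<\frac{\pi}{2}.\]
I would treat the two inequalities separately, in each case inserting a partial sum of the Maclaurin series of $\sin x$ between the two sides and reducing the remaining statement to a polynomial inequality.

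For the upper bound, recall $\sin x<x-\frac{x^3}{6}+\frac{x^5}{120}$ for $0<x<\sqrt6$ (by the Leibniz alternating-series estimate: in this range the terms $x^{2k+1}/(2k+1)!$ strictly decrease, so stopping the series after a positive term overshoots). It then suffices to verify $\bigl(x-\frac{x^3}{6}+\frac{x^5}{120}\bigr)(6+x^2)\le 6x$; after expanding, the left side minus $6x$ equals $\frac{x^5(x^2-14)}{120}$, which is negative for $0<x<\pi/2$ because $x^2<\pi^2/4<14$. For the lower bound the cruder estimate $\sin x>x-\frac{x^3}{6}$ is not strong enough once $x>\sqrt2$, so I would use one extra term, $\sin x>x-\frac{x^3}{6}+\frac{x^5}{120}-\frac{x^7}{5040}$ for $0<x<\sqrt6$ (the same alternating-series argument, now stopping after a negative term). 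Substituting this into $\frac{4x}{4+x^2}<\sin x$ and expanding, one finds $\bigl(x-\frac{x^3}{6}+\frac{x^5}{120}-\frac{x^7}{5040}\bigr)(4+x^2)-4x=x^3\,q(x^2)$ with $q(t)=\frac13-\frac{2t}{15}+\frac{19t^2}{2520}-\frac{t^3}{5040}$, so the task reduces to showing $q(t)>0$ on $(0,\pi^2/4)$. Here $q'(t)=-\frac{t^2}{1680}+\frac{19t}{1260}-\frac{2}{15}$ has negative leading coefficient and negative discriminant, hence $q$ is strictly decreasing, and $q(5/2)>0$ is an immediate rational computation (at $t=5/2$ the first two terms of $q$ cancel), so $q(t)>q(\pi^2/4)>q(5/2)>0$ on the interval, using $\pi^2<10$.

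The part I expect to require the most care is calibrating the truncation orders so that the resulting polynomial inequalities hold all the way up to $x=\pi/2$ and not merely near the origin: the degree-$3$ Taylor bound handles the upper inequality but fails for the lower one on $(\sqrt2,\pi/2)$, which is why the lower inequality needs the degree-$7$ truncation, and one must carry along the exact interval of validity ($0<x<\sqrt6$, which comfortably contains $(0,\pi/2)$) of each alternating-series bound. If one prefers to avoid Taylor truncations entirely, an alternative is to work directly with the functions $g(x)=\frac{6x}{6+x^2}-\sin x$ and $\widetilde g(x)=\sin x-\frac{4x}{4+x^2}$: each vanishes to order $\ge3$ at $0$, so differentiating three or four times and bounding the resulting elementary expressions on $(0,\pi/2)$ recovers the required sign; the bookkeeping there is somewhat heavier than in the polynomial route above.
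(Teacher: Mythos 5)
Your proof is correct, but note that the paper itself offers nothing to compare it with: Lemma \ref{lemtriang} is imported verbatim from \cite{tang} with no proof reproduced here, so you are supplying an argument where the paper has only a citation. I checked your computations and they all hold up: multiplying the chain by $x\sin x>0$ does reduce it to $\frac{4x}{4+x^2}<\sin x<\frac{6x}{6+x^2}$; the alternating-series bounds $x-\frac{x^3}{6}+\frac{x^5}{120}-\frac{x^7}{5040}<\sin x<x-\frac{x^3}{6}+\frac{x^5}{120}$ are valid on $(0,\sqrt6)\supset(0,\pi/2)$ since the term ratios are $x^2/((2k+2)(2k+3))\le x^2/6<1$; the expansion $\bigl(x-\frac{x^3}{6}+\frac{x^5}{120}\bigr)(6+x^2)-6x=\frac{x^5(x^2-14)}{120}<0$ is exact; and for the lower bound, $q(t)=\frac13-\frac{2t}{15}+\frac{19t^2}{2520}-\frac{t^3}{5040}$ satisfies $-5040\,q'(t)=3t^2-76t+672$ with discriminant $5776-8064<0$, so $q$ is strictly decreasing, and $q(5/2)=\frac{475}{10080}-\frac{125}{40320}=\frac{1775}{40320}>0$ with $\pi^2/4<5/2$ finishes it. Your instinct about calibration is the right one to flag: the degree-$5$ truncation genuinely fails for the lower inequality near $x=\pi/2$, so the degree-$7$ term is not optional. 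A slicker route to half the lemma, for what it is worth, is the Laurent expansion $\frac{1}{\sin x}=\frac{1}{x}+\frac{x}{6}+\frac{7x^3}{360}+\cdots$, whose coefficients are all positive on $(0,\pi)$; that gives the left inequality immediately, though the right inequality still requires bounding the tail, so your symmetric polynomial treatment is arguably cleaner overall.
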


\begin{lem}\label{lemtrisum}
Let $T\geq2$ be an integer. Then
\[2T\left(\frac{\ln T}{\pi}+0.163\right)<\sum_{i=1}^{T-1}{\frac{1}{\sin\frac{\pi i}{2T}}}<
2T\left(\frac{\ln T}{\pi}+0.263\right)+\frac{3\pi}{8T}.\]
\end{lem}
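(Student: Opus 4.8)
The plan is to obtain both bounds by sandwiching the sum $\sum_{i=1}^{T-1}\frac{1}{\sin\frac{\pi i}{2T}}$ between integrals, using the trigonometric estimate of Lemma \ref{lemtriang} applied term by term with $x = \frac{\pi i}{2T}$ (note $0 < \frac{\pi i}{2T} < \frac{\pi}{2}$ for $1 \le i \le T-1$, so the hypothesis of Lemma \ref{lemtriang} is met). This reduces the problem to estimating $\sum_{i=1}^{T-1}\frac{1}{i}$ (a harmonic sum) and $\sum_{i=1}^{T-1} i$ (which is exactly $\frac{T(T-1)}{2}$). First I would write, via Lemma \ref{lemtriang},
\[
\sum_{i=1}^{T-1}\left(\frac{2T}{\pi i}+\frac{\pi i}{12T}\right) < \sum_{i=1}^{T-1}\frac{1}{\sin\frac{\pi i}{2T}} < \sum_{i=1}^{T-1}\left(\frac{2T}{\pi i}+\frac{\pi i}{8T}\right),
\]
so everything comes down to controlling $H_{T-1} := \sum_{i=1}^{T-1}\frac{1}{i}$ from both sides and plugging in $\sum_{i=1}^{T-1} i = \frac{T(T-1)}{2}$.

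For the harmonic sum I would use the standard integral comparison: $\ln T < H_{T-1} < 1 + \ln(T-1) < 1 + \ln T$ (the left inequality from $\frac1i > \int_i^{i+1}\frac{dt}{t}$ summed over $i=1,\dots,T-1$; the right from $\frac1i < \int_{i-1}^{i}\frac{dt}{t}$ summed over $i=2,\dots,T-1$, plus the $i=1$ term). Actually, for a clean constant I would prefer the sharper two-sided bound $\ln T + \gamma < H_{T-1} < \ln T + \gamma + \frac{1}{2(T-1)}$ is overkill; the crude $\ln T < H_{T-1} < \ln T + 1$ combined with the quadratic term should already be enough if the claimed constants $0.163$ and $0.263$ have slack. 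Substituting: the lower bound gives
\[
\sum_{i=1}^{T-1}\frac{1}{\sin\frac{\pi i}{2T}} > \frac{2T}{\pi}H_{T-1} + \frac{\pi}{12T}\cdot\frac{T(T-1)}{2} > \frac{2T}{\pi}\ln T + \frac{\pi(T-1)}{24},
\]
and one checks $\frac{\pi(T-1)}{24} \ge 0.163\cdot 2T$ fails for large $T$ — so in fact the leading $\frac{2T\ln T}{\pi}$ term is not the whole story and the $0.163$ must absorb a genuine $\Theta(T)$ contribution. That means I should keep the harmonic-sum constant honest: using $H_{T-1} > \ln T + \gamma - \varepsilon_T$ with $\gamma \approx 0.5772$ gives $\frac{2T}{\pi}(\ln T + \gamma) = \frac{2T\ln T}{\pi} + \frac{2\gamma}{\pi}T$, and $\frac{2\gamma}{\pi} \approx 0.367$, already exceeding $0.163$; the quadratic term only helps. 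So the lower bound $2T(\frac{\ln T}{\pi} + 0.163)$ follows with room to spare once $\frac{2\gamma}{\pi} > 0.163$ is combined with an explicit error control on $H_{T-1} - \ln T - \gamma$ and the positive $\frac{\pi(T-1)}{24}$ term; I would verify the small cases $T=2,3$ (where asymptotic estimates are loosest) by direct computation.

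For the upper bound, symmetrically: $\sum_{i=1}^{T-1}\frac{1}{\sin\frac{\pi i}{2T}} < \frac{2T}{\pi}H_{T-1} + \frac{\pi}{8T}\cdot\frac{T(T-1)}{2} < \frac{2T}{\pi}(\ln T + \gamma + \delta_T) + \frac{\pi(T-1)}{16}$, and I must show this is at most $2T(\frac{\ln T}{\pi} + 0.263) + \frac{3\pi}{8T}$. Here $\frac{\pi(T-1)}{16} = \frac{\pi}{16}T - \frac{\pi}{16}$, and $\frac{\pi}{16} \approx 0.196$; combined with $\frac{2\gamma}{\pi}T \approx 0.367T$ this gives roughly $0.563T$, which badly overshoots $0.526T = 2\cdot 0.263\,T$. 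So the crude bound is too lossy and I would instead not pull the $\frac{\pi i}{8T}$ coefficient all the way — better to use the exact identity $\sum_{i=1}^{T-1}\frac{\pi i}{8T} = \frac{\pi(T-1)}{16}$ but combine with the tighter harmonic estimate $H_{T-1} < \ln(T-\tfrac12) + \gamma + o(1)$, or, more robustly, bound a fixed number of initial terms of $\sum \frac{1}{\sin\frac{\pi i}{2T}}$ exactly (these are where $\frac{1}{\sin x}$ is largest relative to $\frac1x$) and apply the integral $\int_{0}^{T-\frac12}\frac{dt}{\sin\frac{\pi t}{2T}}$-type comparison to the tail; the residual $\frac{3\pi}{8T}$ in the statement is precisely the telltale sign that an Euler–Maclaurin / endpoint-correction argument comparing the sum to $\int \csc$ is intended rather than the crude Lemma \ref{lemtriang} bound. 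The main obstacle, then, is not any single inequality but calibrating the constants: arranging the integral comparison and the choice of how many leading terms to treat exactly so that the accumulated slack lands below $0.163$ and $0.263$ respectively and the $\frac{3\pi}{8T}$ endpoint term appears with the right sign; I would finish by an explicit check of $T = 2, 3, 4$ to cover the regime where the integral estimates are weakest.
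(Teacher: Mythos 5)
Your strategic outline coincides with the paper's proof: the lower bound comes from applying Lemma \ref{lemtriang} termwise and estimating $\sum_{i=1}^{T-1}1/i$ by an integral together with the exact value $\sum_{i=1}^{T-1}i=T(T-1)/2$, and the upper bound requires treating a fixed number of initial terms exactly and comparing the tail with $\int\csc z\,\mathrm{d}z$ --- your reading of the $\frac{3\pi}{8T}$ term as an endpoint correction is exactly right. But your upper bound stops at precisely the point where the proof lives. What the paper actually does, and what you still need to supply, is: (a) handle $i=1,2$ by Lemma \ref{lemtriang}, contributing $\frac{3T}{\pi}+\frac{3\pi}{8T}$; (b) for $3\le i\le T-1$ use convexity of $1/\sin z$ (midpoint rule) to get $\frac{\pi/(2T)}{\sin(\pi i/(2T))}<\int_{\pi i/(2T)-\pi/(4T)}^{\pi i/(2T)+\pi/(4T)}\frac{\mathrm{d}z}{\sin z}$, observe that these intervals abut and lie in $[\frac{5\pi}{4T},\frac{\pi}{2}]$, and evaluate $\int\frac{\mathrm{d}z}{\sin z}=\ln\tan\frac{z}{2}$ together with $\tan x\ge x$ to bound the tail by $\frac{2T}{\pi}\bigl(\ln T-\ln\frac{5\pi}{8}\bigr)$; (c) check the single number $\frac{1}{\pi}\bigl(\frac{3}{2}-\ln\frac{5\pi}{8}\bigr)\approx0.2627<0.263$. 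Step (c) is the ``calibration'' you defer, and it is the whole content: it is what determines that exactly two initial terms (not one, not three) make the constants close.

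On the lower bound, your claim that it follows ``with room to spare'' is not right: the target linear coefficient is $2\times0.163=0.326$, while the harmonic term contributes only $\frac{2}{\pi}(H_{T-1}-\ln T)$, which equals $\frac{2(1-\ln2)}{\pi}\approx0.195$ at $T=2$ and only tends upward to $\frac{2\gamma}{\pi}\approx0.367$; the quadratic term $\frac{\pi(T-1)}{24}$ is genuinely needed, and even then the margin is razor-thin (the paper's own computation produces the constant $\frac{1-\ln2}{\pi}+\frac{\pi}{48}\approx0.16312$ with a leftover $-\frac{\pi}{24}$ to absorb). Worse, your proposed direct check of small cases fails at $T=2$: there the sum is $\frac{1}{\sin(\pi/4)}=\sqrt2\approx1.414$ while $2T(\frac{\ln 2}{\pi}+0.163)\approx1.535$, so the stated lower bound is actually false for $T=2$. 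This does not affect the paper's application (the lemma is only invoked with $T=(2^{rm}-1)/(2^m-1)$, which is large there), but it means no amount of calibration can rescue the constant $0.163$ for all $T\ge2$, and your bookkeeping has to be organized to prove the claim for $T\ge3$ or for the range of $T$ actually used.
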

\begin{IEEEproof}
From Lemma \ref{lemtriang} we have
\begin{eqnarray*}
   \sum_{i=1}^{T-1}{\frac{1}{\sin\frac{\pi i}{2T}}}&>&\frac{2T}{\pi}\sum_{i=1}^{T-1}\frac{1}{i}+\frac{\pi}{12T}\sum_{i=1}^{T-1}i  \\
   &\geq&\frac{2T}{\pi}\left(1+\sum_{i=2}^{T-1}\int_i^{i+1}\frac{\text{d} z}{z}\right)+\frac{\pi(T-1)}{24}\\
   &=&\frac{2T}{\pi}\left(1+\int_2^{T}\frac{\text{d} z}{z}\right)+\frac{\pi(T-1)}{24}\\
   &=&\frac{2T}{\pi}(\ln T+1-\ln2)+\frac{\pi(T-1)}{24}\\
   &=&2T\left(\frac{\ln T}{\pi}+\frac{1-\ln2}{\pi}+\frac{\pi}{48}\right)-\frac{\pi}{24}\\
   &>&2T\left(\frac{\ln T}{\pi}+0.163\right).
\end{eqnarray*}
On the other hand, we have
\begin{eqnarray*}
   \sum_{i=1}^{T-1}{\frac{1}{\sin\frac{\pi
   i}{2T}}}&<&\left(\frac{2T}{\pi}+\frac{\pi}{8T}+\frac{T}{\pi}+\frac{\pi}{4T}\right)+
   \frac{2T}{\pi}\sum_{i=3}^{T-1}{\frac{\frac{\pi}{2T}}{\sin\frac{\pi
   i}{2T}}}\\
   &<&\frac{3T}{\pi}+\frac{3\pi}{8T}+\frac{2T}{\pi}\sum_{i=3}^{T-1}
   \int_{\frac{\pi i}{2T}-\frac{\pi}{4T}}^{\frac{\pi i}{2T}+\frac{\pi}{4T}}
\frac{\text{d} z}{\sin z}\\
&<&\frac{3T}{\pi}+\frac{3\pi}{8T}+\frac{2T}{\pi}\int_{\frac{5\pi}{4T}}^{\frac{\pi}{2}}\frac{\text{d} z}{\sin z}\\
&=&\frac{3T}{\pi}+\frac{3\pi}{8T}-\frac{2T}{\pi}\ln\left(\tan\frac{5\pi}{8T}\right)\\
&\leq&\frac{3\pi}{8T}+\frac{2T}{\pi}\left(\ln T+1.5-\ln\frac{5\pi}{8T}\right)\\
&<&2T\left(\frac{\ln T}{\pi}+0.263\right)+\frac{3\pi}{8T}.
\end{eqnarray*}
\end{IEEEproof}

\begin{lem}\label{estmitation0}
Let $0\leq s\leq Q-2$ be an integer and
\[\Lambda_s=\sum_{\gamma \in \Delta_s}\sum_{y \in
\mathbb{F}_{q}^{*}}(-1)^{\Tr(\gamma y)} .\]Then
$|\Lambda_s|=2^{m-1}$ when $r=1$ and
\[|\Lambda_s|\leq \left[\frac{(n-2m)\ln2}{\pi}+0.263\right]2^{(n-m)/2}+2^{m-1}+1\]
when $r>1$.
\end{lem}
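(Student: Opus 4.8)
The plan is to evaluate $\Lambda_s$ directly by interchanging the order of summation and collapsing the inner sum over $\gamma\in\Delta_s$ into a geometric-type sum, then bound the resulting trigonometric sum using Lemma \ref{lemtrisum}. First I would write $\Delta_s=\{\alpha^{s},\alpha^{s+1},\ldots,\alpha^{s+2^{rm-1}-1}\}$ and, for each fixed $y\in\mathbb{F}_q^*$, split the sum $\sum_{\gamma\in\Delta_s}(-1)^{\Tr(\gamma y)}$ according to whether $\Tr(\gamma y)=0$ or $1$; equivalently, note that $\sum_{\gamma\in\mathbb{F}_Q}(-1)^{\Tr(\gamma y)}=0$ for $y\neq0$, so $\sum_{\gamma\in\Delta_s}(-1)^{\Tr(\gamma y)}=2^{rm-1}-2N_s(y)$ where $N_s(y)$ counts $\gamma\in\Delta_s$ with $\Tr(\gamma y)=1$. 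Summing over $y\in\mathbb{F}_q^*$ and then again interchanging, one is led to expressions of the form $\sum_{y\in\mathbb{F}_q^*}\sum_{\gamma,\gamma'\in\Delta_s}(-1)^{\Tr((\gamma+\gamma')y)}$, which, because $\gamma+\gamma'$ runs through differences of powers of $\alpha$, reduces to counting how many difference pairs lie in $\mathbb{F}_q$ (contributing the main term $2^{m-1}$ and the $+1$) versus not. The case $r=1$ is then immediate: $\Delta_s$ has size $2^{m-1}=q/2$ and a short direct computation gives $|\Lambda_s|=2^{m-1}$ exactly, matching the claim.

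For the case $r>1$ the cleaner route is the standard Gauss-sum technique used in \cite{tang,liu}. I would introduce the canonical additive characters and use the identity, valid for the indicator of the ``interval'' $\Delta_s$ in the cyclic group of nonzero field elements, expressing $\sum_{\gamma\in\Delta_s}$ via a sum over multiplicative characters with coefficients that are partial geometric sums of roots of unity. Concretely, write $\gamma=\alpha^{i}$ and use $\frac{1}{Q-1}\sum_{\chi}\chi(\alpha)^{-i}\chi(\delta)$ as a detector for $\delta=\alpha^i$; after substituting into $\Lambda_s$ and separating the trivial character (which yields the main term together with the $+2^{m-1}+1$ correction coming from the $\mathbb{F}_q^*$ sum and the handling of $\gamma=0$ conventions), the nontrivial characters contribute Gauss sums over $\mathbb{F}_Q$ restricted to $\mathbb{F}_q$, each of modulus $\sqrt{Q}=2^{(n-m)/2}$, multiplied by a factor $\bigl|\sum_{j=0}^{2^{rm-1}-1}\zeta^{j}\bigr|$ where $\zeta$ is a root of unity; this factor is $\bigl|\sin(\pi\,2^{rm-1}k/(Q-1))\big/\sin(\pi k/(Q-1))\bigr|$ for the character of order dividing $Q-1$ indexed by $k$.

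The heart of the estimate is then bounding
\[
\frac{1}{Q-1}\sum_{k=1}^{Q-2}\left|\frac{\sin\frac{\pi 2^{rm-1}k}{Q-1}}{\sin\frac{\pi k}{Q-1}}\right|
\]
by $\frac{(n-2m)\ln 2}{\pi}+0.263$ (up to the lower-order terms already isolated), and here is where Lemma \ref{lemtrisum} enters: after pairing $k$ with $Q-1-k$ and using $|\sin\frac{\pi 2^{rm-1}k}{Q-1}|\le 1$, the sum is dominated by $\sum_{k=1}^{(Q-1)/2}\frac{1}{\sin\frac{\pi k}{Q-1}}$-type quantities, which Lemma \ref{lemtrisum} controls with $T$ on the order of $2^{rm-1}=2^{(n-m)/2}$ so that $\ln T=\frac{(n-m)\ln 2}{2}$; combining with the $\sqrt{Q}=2^{(n-m)/2}$ from the Gauss sums and tracking the constants carefully produces the stated bound with the factor $2^{(n-m)/2}$ and the additive $0.263$. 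The main obstacle I anticipate is the bookkeeping in this last step: one must be careful that the ``$2^{rm-1}$ out of $Q-1$'' ratio of the interval length, which is not exactly $1/2$ since $Q-1$ is odd, introduces no loss in the $\ln$ term — this requires estimating the numerator $|\sin\frac{\pi 2^{rm-1}k}{Q-1}|$ more cleverly than the trivial bound $1$ for the small values of $k$, exactly as in \cite{tang}, so that the constant comes out as $0.263$ rather than something larger. The degenerate contributions ($y$-summation giving $-1$ when the inner character sum is trivial, the convention $\gamma/0$, and the characters whose order does not divide into a clean geometric sum) account for the remaining $+2^{m-1}+1$ and are routine.
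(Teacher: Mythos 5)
Your overall strategy (detect the interval $\Delta_s$ with multiplicative characters, pull out Gauss sums of modulus $\sqrt{Q}$, and control the resulting trigonometric sum with Lemma \ref{lemtrisum}) is the same as the paper's, and your $r=1$ computation is fine (indeed it is even simpler than you suggest: for each fixed $\gamma\neq0$ the inner sum over $y\in\mathbb{F}_q^*$ equals $-1$, so $\Lambda_s=-2^{m-1}$ directly). But there is a genuine gap in the $r>1$ estimate, and it is exactly the step that produces the coefficient $(n-2m)$ in the bound. After expanding $(-1)^{\Tr(\alpha^{i+Nj})}=\frac{1}{Q-1}\sum_{\mu}G_1(\chi_1^{\mu})\xi^{-\mu(i+Nj)}$ with $N=(Q-1)/(q-1)$, the sum over $y=\beta^j\in\mathbb{F}_q^*$ is the geometric sum $\sum_{j=0}^{q-2}\zeta^{-\mu j}$ with $\zeta=\xi^{N}$, which vanishes unless $(q-1)\mid\mu$. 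So only $N-1$ nontrivial characters survive, and the trigonometric sum you must bound is $\sum_{k=1}^{N-1}\frac{1}{2\sin\frac{\pi k}{2N}}$, to which Lemma \ref{lemtrisum} applies with $T=N\approx 2^{n-2m}$, giving $\ln T=(n-2m)\ln 2$. Your displayed sum instead runs over all $k=1,\ldots,Q-2$; bounding that would give a logarithm of order $\ln Q=(n-m)\ln2$ and a strictly weaker inequality than the one claimed. Your choice of $T$ is also internally inconsistent: you write ``$T$ on the order of $2^{rm-1}=2^{(n-m)/2}$,'' but $2^{rm-1}=2^{n-m-1}\neq 2^{(n-m)/2}$, and neither value is the correct $T=N$.

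Two smaller points. First, the difficulty you flag about the interval length $2^{rm-1}$ not being exactly half of $Q-1$ does not require any clever estimation of the numerator: since $(\xi^{-\mu Q/2})^2=\xi^{-\mu Q}=\xi^{-\mu}$, one has the exact identity $\frac{1-\xi^{-\mu Q/2}}{1-\xi^{-\mu}}=\frac{1}{1+\xi^{-\mu Q/2}}$, whose modulus reindexes (over the surviving $\mu=k(q-1)$) to $\frac{1}{2\sin\frac{\pi k}{2N}}$ with no loss. Second, the autocorrelation expression $\sum_{y}\sum_{\gamma,\gamma'}(-1)^{\Tr((\gamma+\gamma')y)}$ in your first paragraph is a second-moment quantity and does not compute $\Lambda_s$, which is linear in the character sum; that detour should be dropped. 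The $+2^{m-1}+1$ in the statement is simply the trivial-character contribution $\frac{Q(q-1)}{2(Q-1)}<2^{m-1}$ plus the $\frac{3\pi}{8N}$ error term from Lemma \ref{lemtrisum}, not an artifact of conventions about $\gamma/0$.
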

\begin{IEEEproof}
Let $\xi\in\mathbb{C}$ be a $(Q-1)$-th root of unity and
$\zeta=\xi^{N}$ where $N=(Q-1)/(q-1)$. Denote by $\chi_1$  the
primitive multiplication character of $\mathbb{F}_{Q}^*$  and define
the Gauss sums over $\mathbb{F}_{Q}$ as
\[G_1(\chi_1^\mu)=\sum_{x \in
\mathbb{F}_{Q}^{*}}\chi_1^{\mu}(x)(-1)^{\Tr(x)}\] for any $0\leq
\mu\leq Q-2$. It is well known that $G_1(\chi_1^0)=-1$ and
$|G_1(\chi^\mu)|=Q^{1/2}$ for any $1\leq \mu\leq Q-2$ \cite{lidl}.
By Fourier inversion we have, for any $0\leq i\leq Q-2$,
\[(-1)^{\Tr(\alpha^i)}=\frac{1}{Q-1}\sum_{\mu=0}^{Q-2}G_1(\chi_1^{\mu})\xi^{-\mu i}.\]
Hence we have
\begin{eqnarray*}
  \Lambda_s &=&\sum_{i=s}^{s+\frac{Q}{2}-1}\sum_{j=0}^{q-2}(-1)^{\Tr(\alpha^{i+Nj})}   \\
   &=&\frac{1}{Q-1}\sum_{i=s}^{s+\frac{Q}{2}-1}\sum_{j=0}^{q-2}
   \sum_{\mu=0}^{Q-2}G_1(\chi_1^{\mu})\xi^{-\mu (i+Nj)}\\
   &=&\frac{1}{Q-1}\sum_{\mu=0}^{Q-2}G_1(\chi_1^{\mu})
\sum_{i=s}^{s+\frac{Q}{2}-1}\xi^{-\mu i}\sum_{j=0}^{q-2}\zeta^{-\mu
j}.
\end{eqnarray*}
Note that
\[\sum_{i=s}^{s+\frac{Q}{2}-1}\xi^{-\mu
i}=\left\{\begin{array} {cl} \frac{Q}{2}&\text{if}~\mu=0\\
\xi^{-\mu
s}\frac{1-\xi^{-\mu\frac{Q}{2}}}{1-\xi^{-\mu}}&\text{otherwise},
\end{array}\right.
\]
\[\sum_{j=0}^{q-2}\zeta^{- \mu j}=\left\{\begin{array} {cl}
q-1&\text{if}~\mu\equiv0\md q-1)\\
0&\text{otherwise}.
\end{array}\right.
\]
Then we have
\begin{eqnarray*}
   \Lambda_s&=&\frac{1}{Q-1}\Bigg[-\frac{Q(q-1)}{2}\\&&
   + (q-1)\sum_{\mu=1\atop
(q-1)\mid\mu}^{Q-2}G_1(\chi_1^{\mu})\xi^{-\mu
s}\frac{1-\xi^{-\mu\frac{Q}{2}}}{1-\xi^{-\mu}}  \Bigg] .
\end{eqnarray*}
Note that when $r=1$, i.e. $Q=q$, the above formula yields
$\Lambda_s=-Q/2=-2^{m-1}$. When $r>1$, we can get that
\begin{eqnarray*}
   |\Lambda_s|&\leq&\frac{Q(q-1)}{2(Q-1)}+\frac{Q^{1/2}(q-1)}{Q-1}
\sum_{\mu=1\atop
(q-1)\mid\mu}^{Q-2}\left|\frac{1-\xi^{-\mu\frac{Q}{2}}}{1-\xi^{-\mu}}\right|\\
&=&\frac{Q(q-1)}{2(Q-1)}+\frac{Q^{1/2}(q-1)}{Q-1} \sum_{\mu=1\atop
(q-1)\mid\mu}^{Q-2}\left|\frac{1}{1+\xi^{-\mu/2}}\right|\\
&=&\frac{Q(q-1)}{2(Q-1)}+\frac{Q^{1/2}(q-1)}{Q-1}\sum_{k=1}^{N-1}\frac{1}{2\sin\frac{\pi
k}{2N}}.
\end{eqnarray*}
By Lemma \ref{lemtrisum} we have
\begin{eqnarray*}
&&|\Lambda_s|\\&\leq&\frac{Q(q-1)}{2(Q-1)}+\frac{Q^{1/2}(q-1)}{2(Q-1)}
\left[2N\left(\frac{\ln N}{\pi}+0.263\right)+\frac{3\pi}{8N}\right]
\\&<&   \frac{q}{2}+Q^{1/2}\left(\frac{1}{\pi}\ln \frac{Q}{q}+0.263\right)+\frac{3\pi
   Q^{1/2}(q-1)^2}{16(Q-1)^2}\\
&\leq&\left[\frac{(n-2m)\ln2}{\pi}+0.263\right]2^{(n-m)/2}+2^{m-1}+1.
\end{eqnarray*}
\end{IEEEproof}

\begin{lem}\label{estmitation1}
Let $0\leq s\leq Q-2$ be an integer. Denote
\[\Gamma_s=\sum_{\gamma \in \Delta_s}\sum_{y \in
\mathbb{F}_{q}^{*}}(-1)^{\Tr(\gamma y) + \tr(y^u)},
\]
where  $u$ is an integer with $(u,q-1)=1$. Then
\begin{eqnarray*}
  |\Gamma_s| &\leq&\left[\frac{(n-m)\ln2}{\pi}+0.263\right]2^{{n}/{2}}  \\
   &&-\left[\frac{(n-2m)\ln2}{\pi}+0.163\right]2^{{n}/{2}-m}+2.
\end{eqnarray*}
\end{lem}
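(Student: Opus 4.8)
The plan is to evaluate $\Gamma_s$ by the Gauss-sum method already used for Lemma~\ref{estmitation0}, but now carrying a \emph{second} multiplicative character to absorb the extra summand $\tr(y^u)$. Fix a primitive $(Q-1)$-th root of unity $\xi$; let $\chi_1$ be the multiplicative character of $\mathbb{F}_Q^*$ with $\chi_1(\alpha)=\xi$, and $\psi$ the multiplicative character of $\mathbb{F}_q^*$ with $\psi(\beta)=\zeta$, where $\zeta=\xi^N$ is a primitive $(q-1)$-th root of unity. With $G_1(\chi_1^\mu)=\sum_{x\in\mathbb{F}_Q^*}\chi_1^\mu(x)(-1)^{\Tr(x)}$ and $g(\psi^\lambda)=\sum_{z\in\mathbb{F}_q^*}\psi^\lambda(z)(-1)^{\tr(z)}$ — so that $|G_1(\chi_1^\mu)|=Q^{1/2}$ and $|g(\psi^\lambda)|=q^{1/2}$ for nonzero exponents while $G_1(\chi_1^0)=g(\psi^0)=-1$ — I would use Fourier inversion to write $(-1)^{\Tr(\gamma y)}$ and $(-1)^{\tr(y^u)}$ as $\frac{1}{Q-1}\sum_\mu G_1(\chi_1^\mu)\xi^{-\mu(\cdot)}$ and $\frac{1}{q-1}\sum_\lambda g(\psi^\lambda)\zeta^{-\lambda u(\cdot)}$ respectively, and parametrize $\gamma=\alpha^i$ with $s\le i\le s+\tfrac{Q}{2}-1$ and $y=\beta^j=\alpha^{Nj}$ with $0\le j\le q-2$.

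Substituting, the sum over $j$ becomes $\sum_{j=0}^{q-2}\zeta^{-j(\mu+\lambda u)}$, which is $q-1$ when $(q-1)\mid(\mu+\lambda u)$ and $0$ otherwise; since $(u,q-1)=1$ this determines $\lambda$ uniquely from $\mu$ and collapses the double character sum to a single sum over $0\le\mu\le Q-2$. The inner sum over $\gamma\in\Delta_s$ is geometric: it equals $\tfrac{Q}{2}$ for $\mu=0$ and $\xi^{-\mu s}\frac{1-\xi^{-\mu Q/2}}{1-\xi^{-\mu}}$ for $\mu\neq0$. As $\mu=0$ forces $\lambda=0$, that term contributes the main term $\frac{Q}{2(Q-1)}$. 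For $\mu\neq0$ I would split on divisibility by $q-1$: if $(q-1)\mid\mu$ then $\lambda=0$ and $|g(\psi^0)|=1$, otherwise $|g(\psi^\lambda)|=q^{1/2}$; in both cases $|G_1(\chi_1^\mu)|=Q^{1/2}$. Writing $\sum_{(q-1)\nmid\mu}=\sum_{\mu=1}^{Q-2}-\sum_{\mu=1\atop(q-1)\mid\mu}^{Q-2}$ and applying the triangle inequality gives
\[|\Gamma_s|\le\frac{Q}{2(Q-1)}+\frac{Q^{1/2}q^{1/2}}{Q-1}\sum_{\mu=1}^{Q-2}\left|\frac{1-\xi^{-\mu Q/2}}{1-\xi^{-\mu}}\right|-\frac{Q^{1/2}(q^{1/2}-1)}{Q-1}\sum_{\mu=1\atop(q-1)\mid\mu}^{Q-2}\left|\frac{1-\xi^{-\mu Q/2}}{1-\xi^{-\mu}}\right|.\]

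To finish I would handle the two cosecant-type sums exactly as in Lemma~\ref{estmitation0}. Since $Q-1$ is odd, $2$ is invertible modulo $Q-1$, so $\left|\frac{1-\xi^{-\mu Q/2}}{1-\xi^{-\mu}}\right|=\left|\frac{1}{1+\xi^{-\mu/2}}\right|$; after reindexing, the first sum is controlled by Lemma~\ref{lemtrisum} with $T=Q-1$, whose \emph{upper} estimate yields the $\left[\tfrac{(n-m)\ln2}{\pi}+0.263\right]2^{n/2}$ term (recall $Q^{1/2}q^{1/2}=2^{n/2}$ and $\ln(Q-1)<(n-m)\ln2$). The second sum is precisely the sparse sum appearing in Lemma~\ref{estmitation0}, i.e. the one matching Lemma~\ref{lemtrisum} with $T=N=(Q-1)/(q-1)$, and here its \emph{lower} estimate is needed; after the arithmetic simplification $\tfrac{(q^{1/2}-1)N}{Q-1}=\tfrac{1}{q^{1/2}+1}$ (with $\tfrac{Q^{1/2}}{q^{1/2}+1}$ compared against $2^{n/2-m}=\tfrac{Q^{1/2}}{q^{1/2}}$ and $\ln N$ against $(n-2m)\ln2$) it produces the subtracted $\left[\tfrac{(n-2m)\ln2}{\pi}+0.163\right]2^{n/2-m}$ term. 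The residual pieces — $\frac{Q}{2(Q-1)}<1$, the $\tfrac{3\pi}{8(\cdot)}$ tails from Lemma~\ref{lemtrisum}, and the logarithmic roundings — are $O(1)$ and get absorbed into the additive $+2$.

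The step I expect to be the main obstacle is the bookkeeping in the split above: one must arrange that the sparse sum over $(q-1)\mid\mu$ enters with a minus sign, so that it must be bounded \emph{below} (the left inequality of Lemma~\ref{lemtrisum}) while the full sum is bounded \emph{above}, and then check that the constants line up — in particular that the slack between $\frac{Q^{1/2}}{q^{1/2}+1}$ and $2^{n/2-m}$, and between $\ln N$ and $(n-2m)\ln2$, is genuinely swallowed by the $+2$. A secondary care point is the reindexing: $\sum_{\mu=1}^{Q-2}\left|\frac{1-\xi^{-\mu Q/2}}{1-\xi^{-\mu}}\right|$ is, after the substitution above, literally a sum of cosecants over only the odd indices in $\{1,\dots,Q-3\}$, so to invoke Lemma~\ref{lemtrisum} cleanly one should route it through the same ``$\tfrac12\sum_{k}$'' normalization already used in Lemma~\ref{estmitation0}.
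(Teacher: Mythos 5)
Your proposal is correct and follows essentially the same route as the paper: the same double Fourier inversion with Gauss sums over $\mathbb{F}_Q$ and $\mathbb{F}_q$, the same collapse of the $j$-sum via $(q-1)\mid(\mu+\lambda u)$ using $(u,q-1)=1$, the same split on $(q-1)\mid\mu$ rewritten as a full sum minus a sparse sum, and the same asymmetric use of Lemma~\ref{lemtrisum} (upper bound at $T=Q-1$, lower bound at $T=N$). The care points you flag — the minus sign forcing the lower estimate on the sparse sum, and the cosecant reindexing — are exactly the steps the paper's proof executes.
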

\begin{IEEEproof}
Notations the same as those in the proof of Lemma \ref{estmitation0}
and further assume $\chi_2$ to be the primitive multiplication
character of  $\mathbb{F}_{q}^*$, and denote the Gauss sums over
 $\mathbb{F}_{q}$ by $G_2(\chi_2^\nu)$ for any $0\leq \nu\leq q-2$,
 i.e.
\[G_2(\chi_2^\nu)=\sum_{x \in
\mathbb{F}_{q}^{*}}\chi_2^{\nu}(x)(-1)^{\tr(x)}.\] We also have
$G_2(\chi_2^0)=-1$, $|G_2(\chi_2^\mu)|=q^{1/2}$ for any $1\leq
\nu\leq q-2$ and
\[(-1)^{\tr(\beta^j)}=\frac{1}{q-1}\sum_{\nu=0}^{q-2}G_2(\chi_2^{\nu})\zeta^{-\nu j}\]
for any $0\leq j\leq q-2$. Hence we have
\begin{eqnarray*}
  \Gamma_s &=& \sum_{i=s}^{s+\frac{Q}{2}-1}\sum_{j=0}^{q-2}(-1)^{\Tr(\alpha^i \beta^j) + \tr(\beta^{ju})} \\
   &=&\frac{1}{(Q-1)(q-1)}\sum_{i=s}^{s+\frac{Q}{2}-1}\sum_{j=0}^{q-2}
   \sum_{\mu=0}^{Q-2}G_1(\chi_1^{\mu})\xi^{-\mu (i+Nj)}\\&&
\qquad\qquad\qquad\qquad\qquad\times\sum_{\nu=0}^{q-2}G_2(\chi_2^{\nu})\zeta^{-\nu ju}\\
&=&\frac{1}{(Q-1)(q-1)}\sum_{\mu=0}^{Q-2}\sum_{\nu=0}^{q-2}G_1(\chi_1^{\mu})G_2(\chi_2^{\nu})\\
&&\qquad\qquad\qquad\times\sum_{i=s}^{s+\frac{Q}{2}-1}\xi^{-\mu
i}\sum_{j=0}^{q-2}\zeta^{- (\nu u+\mu)j}.
\end{eqnarray*}
Note that
\[\sum_{i=s}^{s+\frac{Q}{2}-1}\xi^{-\mu
i}=\left\{\begin{array} {cl} \frac{Q}{2}&\text{if}~\mu=0\\
\xi^{-\mu
s}\frac{1-\xi^{-\mu\frac{Q}{2}}}{1-\xi^{-\mu}}&\text{otherwise},
\end{array}\right.
\]
\[\sum_{j=0}^{q-2}\zeta^{- (\nu u+\mu)j}=\left\{\begin{array} {cl}
q-1&\text{if}~\nu u+\mu\equiv0\md q-1)\\
0&\text{otherwise}.
\end{array}\right.
\]
Since $\nu u+\mu\equiv0\mod(q-1)$ if and only if $\nu=0$ and
$\mu=k(q-1)$ for some $0\leq k\leq N-1$, or $\nu\equiv
q-1-\tilde{u}\mu\md q-1)$ and $(q-1)\nmid \mu$ where
$\tilde{u}u\equiv1\md q-1)$, we have
\begin{eqnarray*}
   &&\Gamma_s\\&=&\frac{1}{(Q-1)(q-1)}\Bigg[\frac{Q(q-1)}{2}\\&&+
(q-1)\sum_{\mu=1\atop
(q-1)\nmid\mu}^{Q-2}G_1(\chi_1^{\mu})G_2(\chi_2^{q-1-\tilde{u}\mu})\xi^{-\mu
s}\frac{1-\xi^{-\mu\frac{Q}{2}}}{1-\xi^{-\mu}}\\&&+
 (q-1)\sum_{\mu=1\atop
(q-1)\mid\mu}^{Q-2}G_1(\chi_1^{\mu})(-1)\xi^{-\mu
s}\frac{1-\xi^{-\mu\frac{Q}{2}}}{1-\xi^{-\mu}}  \Bigg] .
\end{eqnarray*}
Therefore, we can get that
\begin{eqnarray*}
   |\Gamma_s|&\leq&\frac{Q}{2(Q-1)}+\frac{Q^{1/2}q^{1/2}}{Q-1}
\sum_{\mu=1\atop
(q-1)\nmid\mu}^{Q-2}\left|\frac{1-\xi^{-\mu\frac{Q}{2}}}{1-\xi^{-\mu}}\right|\\&&+
\frac{Q^{1/2}}{Q-1}\sum_{\mu=1\atop
(q-1)\mid\mu}^{Q-2}\left|\frac{1-\xi^{-\mu\frac{Q}{2}}}{1-\xi^{-\mu}}\right|\\
&<&1+\frac{Q^{1/2}q^{1/2}}{Q-1}
\sum_{\mu=1}^{Q-2}\left|\frac{1}{1+\xi^{-\mu/2}}\right|\\
&&-\frac{Q^{1/2}(q^{1/2}-1)}{Q-1}\sum_{\mu=1\atop
(q-1)\mid\mu}^{Q-2}\left|\frac{1}{1+\xi^{-\mu/2}}\right|\\
&\leq& 1+\frac{Q^{1/2}q^{1/2}}{Q-1}
\sum_{\mu=1}^{Q-2}\frac{1}{2\sin\frac{\pi \mu}{2(Q-1)}}\\
&&-\frac{Q^{1/2}(q^{1/2}-1)}{Q-1}\sum_{k=1}^{N-1}\frac{1}{2\sin\frac{\pi
k}{2N}}.
\end{eqnarray*}
When $r=1$, i.e. $Q=q$ and $N=1$, by Lemma \ref{lemtrisum} we get
\begin{eqnarray*}
  |\Gamma_s| &\leq&1+\frac{q}{2(q-1)}\bigg[2(q-1)\left(\frac{\ln(q-1)}{\pi}+0.263\right)\\
  &&\qquad\qquad\qquad+\frac{3\pi}{8(q-1)}\bigg]  \\
   &\leq&2+\left(\frac{m\ln2}{\pi}+0.263\right)2^m.
\end{eqnarray*}
When $r>1$, by Lemma \ref{lemtrisum} we have
\begin{eqnarray*}
|\Gamma_s|&\leq&1+\frac{Q^{1/2}q^{1/2}}{2(Q-1)}\Bigg[
2(Q-1)\bigg(\frac{\ln(Q-1)}{\pi}+0.263\bigg)\\&&+\frac{3\pi}{8(
Q-1)}\Bigg] -\frac{Q^{1/2}(q^{1/2}-1)}{2(Q-1)}2N\left(\frac{\ln
N}{\pi}+0.163\right)\\
&<&2+\left(\frac{\ln Q}{\pi}+0.263\right)Q^{1/2}q^{1/2}\\
&&-\left(\frac{\ln N}{\pi}+0.163\right)\frac{Q^{1/2}}{q^{1/2}+1}\\
&\approx&\left[\frac{(n-m)\ln2}{\pi}+0.263\right]2^{{n}/{2}}\\
&&-\left[\frac{(n-2m)\ln2}{\pi}+0.163\right]2^{{n}/{2}-m}+2.
\end{eqnarray*}
Hence for any $r\geq 1$ approximately we can write  that
\begin{eqnarray*}
  |\Gamma_s| &\leq&\left[\frac{(n-m)\ln2}{\pi}+0.263\right]2^{{n}/{2}}  \\
   &&-\left[\frac{(n-2m)\ln2}{\pi}+0.163\right]2^{{n}/{2}-m}+2.
\end{eqnarray*}
\end{IEEEproof}

The following lemma is an equivalent formulation of
\cite[Theorem~5]{liumc}.
\begin{lem}\cite{liumc}\label{estmitation2}
Let $h$ be the Carlet-Feng function of $k$ variables. Then for any
$a\in\mathbb{F}_{2^k}$,
\[|W_h(a)|\leq\left(\frac{k\ln2}{\pi}+0.485\right)2^{k/2+1}.\]
\end{lem}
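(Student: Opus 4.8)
The plan is to recognize Lemma~\ref{estmitation2} as merely a Walsh-transform restatement of the nonlinearity estimate of \cite{liumc}. For any $k$-variable Boolean function $h$ one has $\mathcal{N}_h=2^{k-1}-\frac12\max_{a\in\mathbb{F}_{2^k}}|W_h(a)|$, equivalently $\max_a|W_h(a)|=2^k-2\mathcal{N}_h$; hence the lower bound $\mathcal{N}_h\geq 2^{k-1}-\bigl(\frac{k\ln2}{\pi}+0.485\bigr)2^{k/2}$ for the $k$-variable Carlet--Feng function furnished by \cite[Theorem~5]{liumc} yields, for every $a$, $|W_h(a)|\leq\max_a|W_h(a)|=2^k-2\mathcal{N}_h\leq\bigl(\frac{k\ln2}{\pi}+0.485\bigr)2^{k/2+1}$, which is exactly the assertion. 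The only point to check is that the normalization of the Carlet--Feng function used here --- the one with support a window $\Delta_l$ of $2^{k-1}$ consecutive powers of a primitive element, appearing as $\omega$ in Construction~\ref{const2} --- is an affine image of the one treated in \cite{liumc}, so that $\max_a|W_h(a)|$, being invariant under affine changes of the input variable, is unchanged.

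For self-containedness, and because it runs parallel line by line to the proofs of Lemmas~\ref{estmitation0} and \ref{estmitation1}, I would also sketch the direct Gauss-sum derivation. Since $(-1)^{h(x)}$ equals $-1$ precisely on $\supp(h)$, we have $W_h(a)=\sum_{x\in\mathbb{F}_{2^k}}(-1)^{\tr_1^k(ax)}-2\sum_{x\in\Supp(h)}(-1)^{\tr_1^k(ax)}$; for $a\neq0$ the first sum vanishes and $W_h(0)=0$ by balancedness, so it suffices to bound $W_h(a)=-2\sum_{i=0}^{2^{k-1}-1}(-1)^{\tr_1^k(\alpha^{t+i})}$ for $a=\alpha^t\neq0$, with $\alpha$ a primitive element of $\mathbb{F}_{2^k}$ and $\supp(h)$ taken to be $\{\alpha^i:0\leq i\leq2^{k-1}-1\}$ after an innocuous shift. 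Expanding $(-1)^{\tr_1^k(\alpha^{t+i})}$ by Fourier inversion over $\mathbb{F}_{2^k}^*$ in terms of the Gauss sums $G(\chi^\mu)=\sum_{x\in\mathbb{F}_{2^k}^*}\chi^\mu(x)(-1)^{\tr_1^k(x)}$ --- $\chi$ a generator of the character group, $G(\chi^0)=-1$, $|G(\chi^\mu)|=2^{k/2}$ for $\mu\neq0$ --- and summing the geometric series $\sum_{i=0}^{2^{k-1}-1}\xi^{-\mu i}$ with $\xi$ a primitive $(2^k-1)$-th root of unity, the $\mu=0$ term contributes an $O(1)$ quantity and the remainder is at most $\frac{2}{2^k-1}\cdot2^{k/2}\sum_{\mu=1}^{2^k-2}\bigl|\frac{1-\xi^{-\mu2^{k-1}}}{1-\xi^{-\mu}}\bigr|$. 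Using $(\xi^{-\mu2^{k-1}})^2=\xi^{-\mu}$ one rewrites the summand as $\frac{1}{|1+\xi^{-\mu2^{k-1}}|}$, exactly as in Lemmas~\ref{estmitation0} and \ref{estmitation1}; since $\gcd(2^{k-1},2^k-1)=1$ the map $\mu\mapsto\mu2^{k-1}\bmod(2^k-1)$ permutes $\{1,\dots,2^k-2\}$, and after folding $\nu\leftrightarrow 2^k-1-\nu$ the sum becomes $\sum_{j\text{ odd},\,1\leq j\leq 2^k-3}\frac{1}{\sin\frac{\pi j}{2(2^k-1)}}$.

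The only quantitative input is then a bound for that last trigonometric sum, which is the odd-index analogue of Lemma~\ref{lemtrisum}; the same integral-comparison method, applied to the odd-indexed subsum, gives $\sum_{j\text{ odd},\,1\leq j\leq 2^k-3}\frac{1}{\sin\frac{\pi j}{2(2^k-1)}}<(2^k-1)\bigl(\frac{\ln(2^k-1)}{\pi}+c\bigr)$ for an absolute constant $c$, whence $|W_h(a)|\leq\frac{2}{2^k-1}\bigl[2^{k-1}+2^{k/2}(2^k-1)\bigl(\frac{\ln(2^k-1)}{\pi}+c\bigr)\bigr]$, which is at most $\bigl(\frac{k\ln2}{\pi}+0.485\bigr)2^{k/2+1}$ once one checks that $c$ together with the lower-order corrections stays below $0.485$ and disposes of the finitely many small values of $k$ by direct computation. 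The main obstacle, as in the two preceding lemmas, is precisely this sharp trigonometric estimate, together with the careful accounting of the $\mu=0$ term and the small-$k$ exceptions needed for a single clean constant to cover all $k$; everything above that layer is mechanical reuse of the Gauss-sum machinery already in place in the paper.
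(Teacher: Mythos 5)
The paper offers no proof of this lemma at all: it is imported as ``an equivalent formulation of \cite[Theorem~5]{liumc}'', and your first paragraph --- converting the nonlinearity (or Walsh) bound of \cite{liumc} into the stated uniform bound via $\max_a|W_h(a)|=2^k-2\mathcal{N}_h$ --- is precisely that equivalence, so you match the paper's treatment; your supplementary Gauss-sum sketch is a reasonable outline of how such a bound is actually derived, with the sharp constant and small-$k$ cases explicitly deferred. One caveat on your ``only point to check'': the function $\omega$ in Construction~\ref{const2} has support a window $\Delta_l$ of $2^{k-1}$ consecutive powers of $\alpha$, whereas the Carlet--Feng function of \cite{CF08} has support $\{0,1,\alpha,\dots,\alpha^{2^{k-1}-2}\}$; these are \emph{not} affine images of one another (one contains $0$, the other does not) but differ in exactly two points, so the Walsh values differ by at most $4$ --- harmless for the estimate, but not for the reason you give.
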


\begin{thm}\label{nlbal}
Let $F$ be the Boolean function defined in Construction
\ref{const2}. Then
\begin{eqnarray*}
  \mathcal{N}_F &\geq&2^{n-1}-\left[\frac{(n-m)\ln2}{\pi}+0.263\right]2^{{n}/{2}}\\
   &&-\left[\frac{(n-m)\ln2}{\pi}+0.485\right]2^{(n-m)/2}\\
   &&+\left[\frac{(n-2m)\ln2}{\pi}+0.163\right]2^{{n}/{2}-m}-2.
\end{eqnarray*}
\end{thm}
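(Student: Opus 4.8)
The plan is to bound the Walsh transform $W_F(a,b)$ over all $(a,b)\in\mathbb{F}_{2^{rm}}\times\mathbb{F}_{2^m}$ and then invoke the identity $\mathcal{N}_F=2^{n-1}-\frac12\max_{(a,b)}|W_F(a,b)|$ established in the Walsh-transform subsection. First I would split the sum defining $W_F(a,b)$ according to whether the second coordinate $y$ is zero or not. Writing $\supp(F)$ as in Construction~\ref{const2}, the contribution from $y\neq0$ involves the set $\{(\gamma y^u,y)\mid y\in\mathbb{F}_{2^m}^*,\gamma\in\Delta_s\}$ and the contribution from $y=0$ involves $\{(\gamma,0)\mid\gamma\in\Delta_l\}$. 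After standard manipulation (replacing $(-1)^{F(x,y)}$ by $1-2\cdot\mathbf{1}_{\supp(F)}$ and using that the all-ones character sum vanishes except at the origin), $W_F(a,b)$ becomes, up to a harmless additive constant, a combination of three sums: one of the form $\sum_{\gamma\in\Delta_s}\sum_{y\in\mathbb{F}_{q}^*}(-1)^{\Tr(a\gamma y^u)+\tr(by)}$, one of the form $\sum_{\gamma\in\Delta_l}(-1)^{\Tr(a\gamma)}$, and lower-order terms.

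Next I would reduce these two main sums to the quantities already estimated. For the $y\neq0$ part: after the substitution $\gamma\mapsto\gamma/a^{?}$ (more precisely absorbing $a$ into the index shift of $\Delta_s$, which is legitimate since $\Delta_s$ is a ``window'' of consecutive powers of $\alpha$ and multiplication by $\alpha^j$ just translates the window) and the substitution $y\mapsto$ a scaling that turns $\tr(by)$ and $\Tr(a\gamma y^u)$ into the shape appearing in $\Gamma_s$, this sum has absolute value at most $|\Gamma_{s'}|$ for some $s'$, which is bounded by Lemma~\ref{estmitation1}; in the degenerate case $b=0$ one instead gets a sum of the shape $\Lambda_{s'}$, bounded by Lemma~\ref{estmitation0}, which is smaller, so Lemma~\ref{estmitation1} governs. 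For the $y=0$ part: the sum $\sum_{\gamma\in\Delta_l}(-1)^{\Tr(a\gamma)}$ is exactly (again up to the window-translation trick) a Walsh coefficient of the $(rm)$-variable Carlet-Feng function $\omega$ whose support is $\Delta_l$, hence bounded by $(\frac{rm\ln2}{\pi}+0.485)2^{rm/2+1}=(\frac{(n-m)\ln2}{\pi}+0.485)2^{(n-m)/2+1}$ via Lemma~\ref{estmitation2}. Combining, $\frac12|W_F(a,b)|$ is at most $\frac12|\Gamma_{s'}|+\frac12\cdot(\text{Carlet-Feng bound})+O(1)$, and plugging in Lemmas~\ref{estmitation1} and~\ref{estmitation2} yields exactly the claimed lower bound on $\mathcal{N}_F$ (the $-2$ and the small residual terms being absorbed into the stated constant).

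The main obstacle I anticipate is the careful bookkeeping of the change of variables in the $y\neq0$ part: one must verify that conjugating by $a$ genuinely only translates the index window $\Delta_s$ (so that a bound uniform in $s$ suffices) and that the exponent $u$ interacts correctly with the scaling of $y$ — in particular that $(u,q-1)=1$ guarantees $y\mapsto y$ composed with the relevant power permutes $\mathbb{F}_q^*$, so that the resulting sum really is of the form $\Gamma_{s'}$ rather than something with an extra multiplicative character twist. A secondary point is handling the boundary/degenerate cases $a=0$ and $b=0$ separately and checking they do not dominate: when $a=0$ the $y=0$ part contributes only $\pm|\Delta_l|=\pm 2^{rm-1}$ which is far below the main term, and when $b=0$ one uses Lemma~\ref{estmitation0} in place of Lemma~\ref{estmitation1}. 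Once these reductions are in place, the final inequality is just arithmetic: add the three bounds, use $|\supp(F)|=2^{n-1}$ to pin down the constant term, and collect the $2^{n/2}$, $2^{(n-m)/2}$, and $2^{n/2-m}$ coefficients to match the statement.
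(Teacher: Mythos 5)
Your overall strategy coincides with the paper's: split $W_F(a,b)$ according to $y\neq 0$ versus $y=0$, reduce the first part to $\Gamma_{s'}$ (or $\Lambda_{s'}$ when $b=0$) by translating the window $\Delta_s$ and rescaling $y$ using $(u,2^m-1)=1$, bound the second part as a Walsh coefficient of the Carlet--Feng function $\omega$ via Lemma~\ref{estmitation2}, and combine. However, your treatment of the case $a=0$, $b\neq 0$ contains a genuine error. You claim the $y=0$ part contributes only $\pm|\Delta_l|=\pm 2^{rm-1}$ and that this is ``far below the main term.'' It is not: after the factor $-2$ this contribution is $\mp 2^{rm}=\mp 2^{n-m}$, and for $r>1$ one has $n-m>n/2$, so $2^{n-m}$ by itself already exceeds the entire claimed bound on $\max|W_F|$ (e.g.\ for $n=12$, $m=3$ one compares $2^{9}=512$ against roughly $380$). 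Moreover, when $a=0$ the $y\neq0$ part cannot be reduced to $\Gamma_{s'}$ or $\Lambda_{s'}$ at all, since there is no nonzero $a$ to absorb into the window translation; it equals $-2\sum_{\gamma\in\Delta_s}\sum_{y\in\mathbb{F}_{2^m}^*}(-1)^{\tr(by)}=+2^{rm}$. The correct resolution, which the paper uses, is that these two large contributions cancel exactly, giving $W_F(0,b)=0$. As written, your argument fails to establish the stated bound precisely in the new case $r>1$ that the theorem is meant to cover; the repair is local but requires observing the cancellation rather than dismissing the terms as small.

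A smaller bookkeeping point: since $W_F(a,b)=-2\Gamma_{s'}+W_\omega(a)$ for $ab\neq0$, one gets $\frac12|W_F(a,b)|\leq|\Gamma_{s'}|+\frac12|W_\omega(a)|$, not $\frac12|\Gamma_{s'}|+\frac12|W_\omega(a)|$; it is with coefficient $1$ on $|\Gamma_{s'}|$ that the constants from Lemmas~\ref{estmitation1} and~\ref{estmitation2} assemble into exactly the stated bound. Apart from these two points, the reductions you describe (translation of $\Delta_s$ by the appropriate power of $\alpha$ determined by $ab^{-u}$, the substitution $y\mapsto y^{\tilde u}$ legitimized by $(u,2^m-1)=1$, and the separate use of Lemma~\ref{estmitation0} when $b=0$) are exactly those carried out in the paper.
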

\begin{IEEEproof}
We compute $W_F(a,b)$ for any $(a,b)\in \mathbb{F}_{Q} \times
\mathbb{F}_{q}$. When $(a,b)=(0,0)$, we have $W_F(a,b)=0$ since $F$
is balanced. When $(a,b)\neq(0,0)$, we have
\begin{eqnarray*}
W_F(a,b)&=&-2\sum_{(x,y)\in \Supp(F)}(-1)^{\Tr(ax)+\tr(by)}\\
&=&-2\sum_{\gamma \in \Delta_s}\sum_{y \in
\mathbb{F}_{q}^{\ast}}(-1)^{\Tr(a\gamma y^u)+\tr(by)}\\&&-2\sum_{x
\in \Delta_l}(-1)^{\Tr(ax)}.
\end{eqnarray*}
If $a=0,~b\neq0$, then
\begin{eqnarray*}
W_F(a,b)&=&-2\sum_{\gamma \in \Delta_s}\sum_{y \in
\mathbb{F}_{q}^{\ast}}(-1)^{\tr(by)}-2\sum_{x
\in \Delta_l}1\\
&=&-2\times\frac{Q}{2}\times(-1) -2\times\frac{Q}{2}\\
&=&0.
\end{eqnarray*}
If $a\neq0,~b=0$, then
\begin{eqnarray*}
&&W_F(a,b)\\&=&-2\sum_{\gamma \in \Delta_s}\sum_{y \in
\mathbb{F}_{q}^{\ast}}(-1)^{\Tr(a\gamma y^u)}-2\sum_{x
\in \Delta_l}(-1)^{\Tr(ax)}\\
&=&-2\sum_{\gamma \in \Delta_{s'}}\sum_{y \in
\mathbb{F}_{q}^{\ast}}(-1)^{\Tr(\gamma y)}-2\sum_{x \in
\Delta_l}(-1)^{\Tr(ax)}\\&&\quad(\text{note~that}~\alpha^{s'}=a\alpha^s)\\
&=&-2\Lambda_{s'}+W_\omega(a),
\end{eqnarray*}
which leads to
\begin{eqnarray*}
   && |W_F(a,b)|\\ \\
   &\leq&\left\{\begin{array}{ll}
\left[\frac{(n-m)\ln2}{\pi}+0.485\right]2^{(n-m)/2+1}+2^m&\text{if}~r=1\\[10pt]
\begin{array}{l}\left[\frac{(2n-3m)\ln2}{\pi}+0.748\right]2^{(n-m)/2+1}\\+2^{m}+2\end{array}&\text{if}~r>1.
\end{array}\right.
\end{eqnarray*}
according to Lemma \ref{estmitation0} and Lemma \ref{estmitation2}.
If $ab\neq0$, it is easy to see that
\begin{eqnarray*}
W_F(a,b)&=&-2\sum_{\gamma \in \Delta_s}\sum_{y \in
\mathbb{F}_{q}^{\ast}}(-1)^{\Tr(b^{-u}a\gamma
y)+\tr(y^{\tilde{u}})}\\&&-2\sum_{x
\in \Delta_l}(-1)^{\Tr(ax)}\\
&=&-2\Gamma_{s'} +W_\omega(a)
\end{eqnarray*}
for some $0\leq s'\leq Q-2$, where $\tilde{u}u\equiv1\md q-1)$. Then
Lemma \ref{estmitation1} and Lemma \ref{estmitation2} implies that
\begin{eqnarray*}
   |W_F(a,b)|&\leq&2\left[\frac{(n-m)\ln2}{\pi}+0.263\right]2^{{n}/{2}}\\&&+2\left[\frac{(n-m)\ln2}{\pi}+0.485\right]2^{(n-m)/2}  \\
   &&-2\left[\frac{(n-2m)\ln2}{\pi}+0.163\right]2^{{n}/{2}-m}+4.
\end{eqnarray*}
Therefore, we finally get that
\begin{eqnarray*}
   &&\max_{(a,b)\in\mathbb{F}_Q\times\mathbb{F}_q}|W_F(a,b)|\\&=&\max\left\{\max_{a\in\mathbb{F}_Q^*}|W_F(a,0)|,~\max_{(a,b)\in\mathbb{F}_Q^*\times\mathbb{F}_q^*}|W_F(a,b)|\right\}\\
   &\leq&\left[\frac{(n-m)\ln2}{\pi}+0.263\right]2^{{n}/{2}+1}\\&&+\left[\frac{(n-m)\ln2}{\pi}+0.485\right]2^{(n-m)/2+1}  \\
   &&-\left[\frac{(n-2m)\ln2}{\pi}+0.163\right]2^{{n}/{2}-m+1}+4.
\end{eqnarray*}
Then we can complete the proof applying the relation
\[\mathcal{N}_F=2^{n-1}-\frac{1}{2}\max_{(a,b)\in\mathbb{F}_Q\times\mathbb{F}_q}|W_F(a,b)|.\]
\end{IEEEproof}

It can be seen from the expression of the lower bound of the
nonlinearity of $F$ given in Theorem \ref{nlbal} that, for a fixed
$n$, the bigger $m$ is, the higher the lower bound is. In
particular, when $m=n/2$, this lower bound is higher than the one
proposed in \cite{jin} and even higher that the one proposed in
\cite{tang} when $n\geq12$. See Table \ref{tablowbound} for the
comparison of lower bounds obtained in Theorem \ref{nlbal} and some
known ones for some values of $n$ in this case.

\begin{table*}[!htp]{\caption{Comparison of lower bounds of
nonlinearity in the case $n=2m$ \label{tablowbound}}}
\begin{center}\begin{tabular}{|c|c|c|c|c|c|c|c|c|c|c|c|}
  \hline
$n$& 6 & 8 & 10 & 12 & 14 & 16 & 18 & 20&22&24&26\\\hline LB in
Th.~\ref{nlbal}&20&102&457&1930&7936&32211&129863&521671&2091509&8376484&33528475\\\hline
LB in
\cite{jin}&18&93&429&1858&7762&31808&128949&519628&2086991&8366580&33506919\\\hline
LB in
\cite{tang}&20&102&458&1929&7931&32195&129823&521577&2091288&8376003&33527429\\
\hline
\end{tabular}\end{center}
\end{table*}

For small values of number of variables, we compute the exact values
of the nonlinearity of  $F$ for certain choices of $u$ (from
different cyclotomic cosets modulo ($2^m-1$)). Since the
computational results for the case $r=1$ have already presented  in
\cite{jin}, we need only to focus on the case $r>1$ here. Several
results for the case $r=3$ are listed in Table \ref{tabnlvalue}. By
comparing these values with nonlinearity of the Carlet-Feng
functions and the functions constructed in \cite{tang} in the
corresponding cases, it can be seen that, at least for these numbers
of variables, nonlinearity of functions from Construction
\ref{const2} is high.

\begin{table*}[!htp]{\caption{Nonlinearity of $F$ in the case $r=3$, $s=l=0$
\label{tabnlvalue}}}
\begin{center}\begin{tabular}{|c|c|c|c|c|c|}
  \hline
  $n$ & \multicolumn{2}{c|}{$\mathcal{N}_{F}$}  & $\mathcal{N}_{C\text{-}F}$ in \cite{CF08} & $\mathcal{N}_{T\text{-}C\text{-}T}$ in \cite{tang} &$2^{n-1}-2^{n/2-1}$   \\
  \hline
\multirow{2}{*}{12}   & $u=1$
&1982&\multirow{2}{*}{1970}&\multirow{2}{*}{1982}&\multirow{2}{*}{1984}\\\cline{2-3}
   & $u=6$     &1964&& &\\
   \hline
\multirow{2}{*}{16}   & $u=1$
&32408&\multirow{2}{*}{32530}&\multirow{2}{*}{32508}&\multirow{2}{*}{32512}\\\cline{2-3}
& $u=14$     &32406&&&\\
   \hline
\end{tabular}\end{center}
\end{table*}

\subsection{Immunity against FAA's}

As indicated in \cite{carlet},  when $r=1$ and $u=2^t$, the function
$F$ in Construction \ref{const2}, which can be viewed as a variant
of a balanced Tu-Deng function, behaves almost worst against FAA's.
The reason is that $F(x,y)$ only differs from $f(x,y)$, the function
defined in Construction \ref{const1}, when $y=0$, so for any linear
function $L(y)$ of $m$ variables, we have $L(y)F(x,y)=L(y)f(x,y)$,
which implies $\deg LF\leq m+1$ since in this case $\deg f=m=n/2$.
When $r>1$, a similar argument shows that, for any linear function
$L(y)$ of $m$ variables, $\deg LF\leq \deg f+1$. According to
Theorem \ref{nbdeg}, the degree of $f$ is $n-m$. Hence we are clear
that, for a fixed $n$, the smaller $r$ is (or the bigger $m$ is),
the worse behavior the functions from Construction \ref{const2}
against FAA's have, when $u=2^t$. For the case $u\neq 2^t$, the
behavior of functions from Construction \ref{const2} against FAA's
varies, and it is an interesting problem to study for what choice of
$u$ $F$ will play particularly well.

Fixing $s=l=0$ and choosing certain values of the parameters
$r,~m,~u$ (from different cyclotomic cosets modulo ($2^m-1$)), we do
some computer experiments to observe whether the pair $(e,d)$ with
$e<n/2$ and $e+d<n$ such that there is a function $h$ satisfying
$\deg h\leq e$ and $\deg hF\leq d$ exists. It turns
out that:\\
(1) in the cases $r=3$, $m=3$, (i.e. $n=12$), such pair with
$e+d\leq n-2$ does not exist for any possible $u$;\\
(2) in the case $r=3$, $m=4$, (i.e. $n=16$), such pair with $e+d\leq
n-2$ does not
exist  for any possible $u$;\\
(3) in the case $r=5$, $m=3$, (i.e. $n=18$), such pair with $e+d\leq
n-2$ does not exist, and the pairs $(3,14)$ and $(4,13)$ ($e+d=n-1$)
do not exist, for any possible $u$;\\
(4) in the case $r=3$, $m=5$, (i.e. $n=20$), such pair with $e+d\leq
n-2$ does not exist for any possible $u$ except $1$, and the pairs
$(1,15)$, $(2,15)$, $(3,15)$ and $(4,14)$ do not exist for $u=1$.
Besides, the pair $(4,15)$ ($e+d=n-1$) does not exist for $u=11$.
 These experimental results imply that the function
$F$ has good immunity against FAA's.

\section{Conclusion and further work}

We propose a general approach to construct Boolean functions with
good cryptographic properties based on decompositions of additive
groups of finite fields. A class of balanced functions with high
nonlinearity and optimal algebraic degree are constructed via this
approach. Algebraic immunity of these functions is optimal provided
a more generalized combinatorial conjecture on binary strings is
true, and immunity of them against fast algebraic attacks is also
good according to some computational results. This class of
functions covers some known classes of functions with (potential)
optimal algebraic immunity constructed based on additive
decompositions of finite fields.

Finally we should point out that, when $r=1$, behavior of the
function $F$ in Construction \ref{const2} against FAA's was
theoretically studied in \cite{liu}. Therefore, when $r>1$, how to
study behavior of $F$ against FAA's theoretically  will be a further
research topic of the authors.

\end{document}